\documentclass[a4paper,11pt]{article}

\usepackage{jheppub} 

\usepackage{comment}
\usepackage[T1]{fontenc} 
\usepackage[utf8]{inputenc}
\usepackage{amsmath, amssymb, graphics, graphicx}
\usepackage[dvipsnames]{xcolor}
\usepackage{longtable}
\usepackage{subfigure}
\usepackage{multirow}
\usepackage{slashed}
\usepackage{dsfont}
\usepackage{appendix}
\usepackage[shortlabels]{enumitem}
\usepackage{amsthm}
\usepackage[final]{showlabels}
\usepackage{soul}

\usepackage{tikz-cd}
\usetikzlibrary{cd}


\newcommand{\lr}{\left(}
\newcommand{\rr}{\right)}
\newcommand{\ls}{\left[}
\newcommand{\rs}{\right]}
\newcommand{\lc}{\left\{}
\newcommand{\rc}{\right\}}

\newtheorem{theorem}{Theorem}

\def\rmi{\mathrm{i}}
\def\rme{\mathrm{e}}
\def\rmd{\mathrm{d}}

\definecolor{darkgreen}{rgb}{0.0, 0.5, 0.0}


\preprint{RBI-ThPhys-2020-14}

\title{\boldmath Non-Relativistic Supersymmetry\\ on Curved Three-Manifolds}

\author[a,b]{E.~A.~Bergshoeff,}
\author[c]{A.~Chatzistavrakidis,}
\author[a]{J.~Lahnsteiner,}
\author[a]{L.~Romano,}
\author[d]{J.~Rosseel}

\affiliation[a]{Van Swinderen Institute, University of Groningen,\\Nijenborgh 4, 9747 AG Groningen, The Netherlands}
\affiliation[b]{Erwin Schr\"odinger International Institute for Mathematics and Physics,\\ Boltzmanngasse 9, A-1090, Vienna, Austria}
\affiliation[c]{Division of Theoretical Physics, Rudjer Bo\v{s}kovi\'c Institute,\\Bijeni\v{c}ka 54, 10000 Zagreb, Croatia}
\affiliation[d]{Faculty of Physics, University of Vienna,\\Boltzmanngasse 5, A-1090, Vienna, Austria}

\emailAdd{e.a.bergshoeff[at]rug.nl}
\emailAdd{Athanasios.Chatzistavrakidis[at]irb.hr}
\emailAdd{j.m.lahnsteiner[at]outlook.com}
\emailAdd{lucaromano2607[at]gmail.com}
\emailAdd{jan.rosseel[at]univie.ac.at}


  \def\g{\gamma}  \def\d{\delta} 
\def\e{\epsilon} 
   
\def\l{\lambda}  \def\m{\mu}
\def\n{\nu}    
  \def\t{\tau} 
  \def\z{\zeta}


\newcommand{\be}{\begin{equation}}
	\newcommand{\ee}{\end{equation}}
\newcommand{\sfrac}[2]{{\textstyle\frac{#1}{#2}}}
\def\nn{\nonumber}
\def \bea{\begin{eqnarray}} 
	\def\eea{\end{eqnarray}}
\def\bse{\begin{subequations}}	
	\def\ese{\end{subequations}}
\def\bal{\begin{align}} 
	\def\eal{\end{align}}

\def\bi{\begin{itemize}} 
	\def\ei{\end{itemize}}

\abstract{We construct explicit examples of non-relativistic supersymmetric field theories on curved Newton-Cartan three-manifolds. These results are obtained by performing a null reduction of four-dimensional supersymmetric field theories on Lorentzian manifolds and the Killing spinor equations that their supersymmetry parameters obey. This gives rise to a set of algebraic and
differential Killing spinor equations that are obeyed by the supersymmetry parameters of the resulting three-dimensional non-relativistic field theories. We derive necessary and sufficient conditions that determine whether a Newton-Cartan background admits non-trivial solutions of these Killing spinor equations. Two classes of examples of Newton-Cartan backgrounds that obey these conditions are discussed. The first class is characterised by an integrable foliation, corresponding to so-called twistless torsional geometries, and includes manifolds whose spatial slices are isomorphic to the Poincar\'e disc. The second class of examples has a non-integrable foliation structure and corresponds to contact manifolds.}

\begin{document}
\maketitle
\flushbottom

\section{Introduction}

Recent years have seen a lot of activity in the use of localization techniques to study non-perturbative aspects of supersymmetric Quantum Field Theories (susy QFTs). Following the work of \cite{Pestun:2007rz,Kapustin:2009kz}, this has for instance led to the calculation of exact partition functions of susy QFTs, defined on curved backgrounds that admit one or more Killing spinors. These Killing spinors then serve as parameters of the supersymmetry transformation rules that leave susy QFTs on the considered backgrounds invariant. The Lagrangian and transformation rules of susy QFTs on curved backgrounds generically contain various terms, in which the matter fields are non-minimally coupled to the background metric. Although these terms can in principle be obtained by applying the Noether procedure to minimally coupled theories, constructing susy QFTs in this way tends to be rather cumbersome in practice. A less involved and more insightful way to obtain susy QFTs on curved backgrounds was developed by Festuccia and Seiberg in \cite{Festuccia:2011ws} and consists of applying a rigid decoupling limit to matter field theories, that are coupled to off-shell supergravity. Building on this result, a better geometric understanding of the backgrounds on which susy QFTs can be defined, as well as further applications of supersymmetric localization techniques, have been obtained (see \cite{Pestun:2016zxk} for a review).

As mentioned above, the rigid supersymmetry parameters of susy QFTs on curved backgrounds are determined as solutions of Killing spinor equations. In \cite{Festuccia:2011ws}, these Killing spinor equations are obtained by setting the supersymmetry transformations of the fermions of the off-shell supergravity multiplet equal to zero. The equations thus obtained take the schematic form\footnote{This assumes that the gravitini are the only fermionic fields in the supergravity multiplet. In case the supergravity multiplet contains extra fermionic fields, these Killing spinor equations have to be supplemented with algebraic ones. See also the last two paragraphs of section 3.3 for comments on the relevance of such algebraic Killing spinor equations to non-relativistic supersymmetry discussed in this paper.}
\begin{equation}
    D_M\epsilon + \big(\mathcal{B}\Gamma\big)_M \epsilon =0\,. \label{eq:gKSE}
\end{equation}
Here, $\epsilon$ is the supersymmetry parameter, $D_M$ is a covariant spinor derivative and $\big(\mathcal{B}\Gamma\big)_M$ denotes a background one-form, that is matrix-valued in spinor space and that depends on gamma matrices as well as the bosonic  fields of the off-shell supergravity multiplet. Classifying classical backgrounds, on which susy QFTs can be formulated, then involves classifying the bosonic off-shell supergravity field configurations for which the equations \eqref{eq:gKSE} have non-trivial solutions for $\epsilon$. In particular, when restricting to field configurations for which only the metric field is non-trivial, eq.~\eqref{eq:gKSE} reduces to
\begin{equation} \label{eq:CovCons}
     D_M\epsilon = 0 \,.
\end{equation}
Under this restriction, susy QFTs can therefore only be defined on backgrounds that admit one or more covariantly constant spinors. Demanding the existence of a covariantly constant spinor constrains the geometry of a background to be Ricci-flat. This thus singles out tori $T^4$ and $K3$ surfaces, when restricting to compact Euclidean four-manifolds.

In order to obtain more general manifolds on which susy QFTs can be defined, one needs to consider off-shell supergravity backgrounds in which (e.g.~auxiliary) fields other than the metric are turned on, such that eq.~\eqref{eq:gKSE}  admits non-trivial solutions for $\epsilon$. Once such backgrounds are found, one can consider off-shell matter-coupled supergravity theories on them and take the rigid limit that freezes out the fluctuations of the supergravity fields around their background values. Taking this limit in the Lagrangian and supersymmetry transformation rules then leads to the Lagrangian and transformation rules of susy QFTs in non-dynamical curved backgrounds. The background values of the auxiliary fields of the supergravity multiplet are responsible for the non-minimal couplings that are necessary to maintain supersymmetry on a curved manifold.

Most of the developments mentioned above are concerned with the Euclidean case. In the non-Euclidean case, the literature mainly deals with relativistic backgrounds, i.e.~manifolds that are equipped with a non-degenerate Lorentzian metric \cite{Liu:2012bi, Cassani:2012ri}. Recent developments in non-relativistic holography \cite{Son:2008ye,Kachru:2008yh,Balasubramanian:2008dm,Christensen:2013lma,Christensen:2013rfa,Taylor:2015glc, Hartnoll:2016apf} and effective field theory methods for strongly coupled condensed matter systems \cite{Son:2005rv,Hoyos:2011ez,Son:2013rqa,Abanov:2013woa,Gromov:2014vla,Gromov:2015fda,Geracie:2014nka,Geracie:2015dea,Geracie:2015xfa,Geracie:2016inm} have however led to a renewed interest in non-relativistic QFTs on curved backgrounds as well. There exist various notions of non-relativistic differential geometry among which Newton-Cartan geometry is the prime example \cite{Cartan}. Given the usefulness of susy QFTs on curved backgrounds in studying relativistic QFTs in the non-perturbative regime, it is natural to ask whether susy QFTs on non-trivial Newton-Cartan backgrounds can be of similar importance. In order to address this question, one first needs to formulate susy QFTs on curved Newton-Cartan space-times. This is the problem that we will address in this paper.\footnote{In this paper, we will consider susy QFTs whose multiplets in the flat case correspond to representations of the super-Bargmann algebra. It would be interesting to see whether our results can be extended to consider susy QFTs whose multiplets in the flat case are representations of other non-relativistic supersymmetry algebras, such as e.g. super-Lifshitz algebra \cite{Chapman:2015wha,Arav:2019tqm}.}

To construct explicit examples of non-relativistic susy QFTs on curved Newton-Cartan manifolds, one could in principle apply the technique of \cite{Festuccia:2011ws} to matter field theories coupled to non-relativistic off-shell supergravity. In this regard, it is useful to point out that currently not much is known about non-relativistic off-shell supergravity. The only non-relativistic supergravity multiplets considered so far are three-dimensional ones. The original three-dimensional Newton-Cartan supergravity theory of \cite{Andringa:2013mma} is on-shell in the sense that the supersymmetry algebra only closes upon imposition of extra constraints. Some of these constraints can be recognized as fermionic equations of motion, like in the case of relativistic on-shell supergravity, while other constraints are geometrical constraints that have no relativistic on-shell supergravity analog. Extensions of this on-shell theory have been constructed, in which the supergravity algebra is realized without having to impose fermionic equations of motion \cite{Bergshoeff:2015uaa,Bergshoeff:2015ija}. However, for all these multiplets, one still needs geometric constraints in order to close the underlying non-relativistic superalgebra on the fields. It is at present not clear whether there exists a multiplet for which the superalgebra closes without the use of any constraints and from which the previously mentioned multiplets could be obtained as specific truncations. In view of this, it is not clear whether analyzing the Killing spinor equations, that stem from the supersymmetry transformations of the fermionic fields of these multiplets, leads to the most general non-relativistic backgrounds on which non-relativistic susy QFTs can be defined. Indeed, the authors of
\cite{Knodel:2015byb} found that the class of allowed maximally supersymmetric and $\frac12$-BPS backgrounds for one specific non-relativistic supergravity multiplet (constructed in \cite{Bergshoeff:2015uaa}) is rather restricted.

In this paper we will follow a different strategy and obtain non-relativistic susy QFTs in three dimensions by performing a dimensional reduction of relativistic four-dimensional susy QFTs over a lightlike isometry--a so-called null reduction. This is reminiscent of how Newton-Cartan gravity in four dimensions can be obtained as a null reduction of Einstein gravity in five dimensions \cite{Julia:1994bs}. As shown in \cite{Liu:2012bi, Cassani:2012ri}, analysis of the Killing spinor equations, stemming from Old and New Minimal supergravity, implies that four-dimensional relativistic backgrounds on which susy QFTs can be formulated, possess a null Killing vector. It is this fact that we will exploit to obtain non-relativistic susy QFTs on curved backgrounds from four-dimensional relativistic ones. For simplicity, we will restrict ourselves in this paper to the null reduction of four-dimensional theories that are obtained as a rigid limit of matter field theories coupled to Old Minimal supergravity, leaving the New Minimal case for future work.

A particular feature of our null reduction approach is that it ultimately relies on four-dimensional relativistic results. Nevertheless, we will be able to extract some general lessons that we expect to hold when discussing generic non-relativistic supersymmetric backgrounds. We will in particular pay attention to the structure of the three-dimensional non-relativistic Killing spinor equations and see that consistency with local non-relativistic symmetries leads one to include algebraic equations in the set of Killing spinor equations. We will then use these non-relativistic Killing spinor equations to discuss three-dimensional non-relativistic supersymmetric backgrounds in an intrinsically three-dimensional manner. This analysis is technically simpler than the relativistic four-dimensional one. One could thus also advocate combining non-relativistic geometry (of a kind that is obtainable from null reduction) with suitable Killing spinor equations as an alternative way to obtain interesting relativistic supersymmetric backgrounds via dimensional oxidation along a lightlike isometry.

This paper is organized as follows. In section 2 we collect some known results about supersymmetry on Lorentzian four-manifolds, obtained as a rigid limit of matter-coupled Old Minimal supergravity. In section 3 we apply the null reduction to obtain three-dimensional non-relativistic susy QFTs on curved backgrounds together with the Killing spinor equations that their supersymmetry parameters should satisfy. In section 4, we investigate the conditions that various background fields have to satisfy in order for non-trivial solutions of the non-relativistic Killing spinor equations to exist. We also discuss two classes of explicit examples of three-dimensional non-relativistic backgrounds, on which supersymmetry can be defined. We end with a conclusions and outlook section. There are also three appendices. Appendix A summarizes the conventions used in this paper. Appendix B collects a few technical formulae that are needed to perform the null reduction discussed in section 3. Finally, appendix C discusses the integrability conditions for the non-relativistic Killing spinor equations, giving an alternative derivation of some of the results of section 4.

\section{Supersymmetry on Lorentzian Four-Manifolds} \label{sec:susy4d}

Relativistic susy QFTs on curved space-times can be obtained by taking a rigid limit of matter-coupled off-shell supergravity theories \cite{Festuccia:2011ws}. This procedure consists of choosing a non-trivial (i.e.~non-flat) classical\footnote{The classical nature of the background implies that the fermionic fields of the supergravity multiplet assume zero background values.} background for the metric and auxiliary fields of the off-shell supergravity multiplet and taking the limit in which the Planck mass $M_P$ is sent to infinity (after assigning suitable mass dimensions to the fields of the supergravity multiplet). The limit $M_P \rightarrow \infty$ decouples the fluctuations of the supergravity multiplet fields so that one is left with the matter multiplets coupled to the chosen classical background, via minimal and typically also non-minimal coupling terms. In order for the resulting field theory to be supersymmetric, the background fields should be such that the Killing spinor equations, obtained by setting the supersymmetry transformations of the fermionic supergravity multiplet fields equal to zero, admit non-trivial solutions for the supersymmetry parameters. Since one works with off-shell supergravity these Killing spinor equations are independent of the choice of matter fields, which greatly simplifies the search for possible curved backgrounds on which susy QFTs can be formulated.

This limit was discussed explicitly in \cite{Festuccia:2011ws} for the case of chiral matter coupled to $4d$, $\mathcal{N}=1$ Old Minimal supergravity \cite{Cremmer:1982wb,Cremmer:1982en} with a metric $g_{MN}$\,\footnote{Curved indices $M$, $N$, $\cdots$ are raised and lowered using the background metric $g_{MN}$.} and auxiliary fields $\left\{U,V_M\right\}$ as bosonic components, where $U$ is a complex scalar (with complex conjugate $\bar{U}$) and $V_M$ is a real vector. The fermionic field content of the Old Minimal supergravity multiplet consists of a (Majorana) gravitino $\psi_M$, that is zero in a classical background. We mainly follow the notation of  \cite{Cremmer:1982wb,Cremmer:1982en}\,\footnote{Note that this notation is different from the one used in \cite{Festuccia:2011ws}, leading to different prefactors compared to the results of \cite{Festuccia:2011ws}.} but restrict to just one chiral multiplet with components  $\{Z, \chi_L, H\}$, where $Z$ is a dynamical complex scalar, $\chi_L$ a left-handed Weyl fermion and $H$ an auxiliary complex scalar.\,\footnote{Their complex conjugate, anti-chiral counterparts will be denoted by $\{\bar{Z}, \chi_R, \bar{H}\}$. We will also often combine a left-handed spinor $\chi_L$ and a right-handed one $\chi_R$ into a Majorana spinor $\chi$, defined as $\chi = \chi_L + \chi_R$.} Taking the rigid limit of Old Minimal supergravity, one obtains the following Lagrangian for a supersymmetric field theory of a chiral multiplet in a curved four-dimensional background\,\cite{Festuccia:2011ws}: 
\begin{align}
  \label{eq:Lrigid}
  E^{-1}\mathcal{L} &= -\frac{E^{-1}}{3}Z \bar{Z}\mathcal{L}_{SG} - \partial_M Z \,\partial^M \bar{Z} - \bar{\chi}\left(\slashed{D}-\frac{\rmi}{6}\slashed{V}\Gamma_5\right)\chi + H \bar{H}\notag\\
  &\quad\,+\frac13\left(\bar{U}\,\bar{Z}\,H + U\,Z\,\bar{H}\right) + \frac{\rmi}{3}V^M\left(\bar{Z}\partial_M Z - Z\partial_M\bar{Z}\right)\\
  &\quad\,+\mathrm{Re}\left(W''\bar{\chi}_L\chi_L - W' H  - W U \right)\,,\notag
\end{align}
where
\begin{equation}
E^{-1}\mathcal{L}_{SG} = -\frac12\,R - \frac13\,U \bar{U} + \frac13\,V^M V_M\,.
 \end{equation}
In these equations, $E$ is the square root of minus the determinant of the metric, $R$ the background Ricci scalar and the Lorentz-covariant spinor derivative $D_M\chi$ is defined by 
\begin{equation}
  \label{eq:spinorcovder}
  D_M \chi = \left(\partial_M + \frac14 \Omega_M{}^{AB} \Gamma_{AB} \right) \chi\,,
\end{equation}
with $\Omega_M{}^{AB}$ the background spin connection.
The function $W = W(Z)$ depends holomorphically on $Z$ and is the superpotential of the theory. Its derivatives with respect to $Z$ are denoted by
\begin{equation}
  \label{eq:derscalnot}
  W' = \frac{\rmd W}{\rmd Z}\,,\qquad W^{\prime\prime} = \frac{\rmd^2 W}{\rmd Z^2}\,.
\end{equation}
The Lagrangian \eqref{eq:Lrigid} is then invariant under the following supersymmetry transformation rules
\begin{align}
  \label{eq:susytrafos}
  \delta Z &= \bar{\epsilon}_L \chi_L \,, \nonumber \\
  \delta \chi_L &= \frac12 \slashed{\partial} Z \epsilon_R + \frac12 H \epsilon_L \,, \nonumber \\
  \delta H &= \bar{\epsilon}_R \left(\slashed{D} - \frac{\rmi}{6} \slashed{V} \right) \chi_L - \frac{U}{3} \bar{\epsilon}_L\chi_L\,,
\end{align}
provided that
the rigid supersymmetry parameters $\epsilon_{L/R}$ that appear in \eqref{eq:susytrafos} are solutions of the following Killing spinor equations
\begin{align}
  \label{eq:Killspinoreqs}
 & D_M \epsilon_L + \frac{\rmi}{2} V_M \epsilon_L + \frac16 \bar{U} \Gamma_M \epsilon_R - \frac{\rmi}{6} \Gamma_M \slashed{V} \epsilon_L = 0 \,, \nonumber \\ & D_M \epsilon_R - \frac{\rmi}{2} V_M \epsilon_R + \frac16 U \Gamma_M \epsilon_L + \frac{\rmi}{6} \Gamma_M \slashed{V} \epsilon_R = 0 \,.
\end{align}
 The Killing spinor equations \eqref{eq:Killspinoreqs} are obtained by requiring that supersymmetry preserves the chosen classical background for the Old Minimal supergravity multiplet. Since the background value of the gravitino is zero, the only non-trivial conditions that arise from this requirement, are obtained by setting the gravitino supersymmetry transformation rule, evaluated on the background, equal to zero. This then leads to \eqref{eq:Killspinoreqs}.
 One can explicitly check that the Lagrangian \eqref{eq:Lrigid} is invariant under the transformation rules \eqref{eq:susytrafos} provided that the Killing spinor equations \eqref{eq:Killspinoreqs} hold. 
 
Requiring that the Killing spinor equations \eqref{eq:Killspinoreqs} have non-trivial solutions leads to constraints on the background geometry and the auxiliary fields $\{V_M$, $U$, $\bar{U}\}$. Before discussing this in more detail, it is worth pointing out that many results in the literature \cite{Dumitrescu:2012ha,Dumitrescu:2012at} are strictly speaking only valid for Euclidean backgrounds, while in this paper we are interested in Lorentzian backgrounds. The difference between the Euclidean and Lorentzian cases manifests itself in the reality conditions that are imposed on the background values of the auxiliary fields $V_M$ and $U$, $\bar{U}$. In the Euclidean case, the background value $V_M$ is allowed to be complex while $U$ and $\bar{U}$ are allowed to correspond to two independent complex background scalars. Likewise, the Killing spinors $\epsilon_L$ and $\epsilon_R$ are treated as two independent Weyl spinors and the equations \eqref{eq:Killspinoreqs} are independent. In contrast, for the Lorentzian case one has to impose that $V_M$ is real, that $\bar{U}$ is the complex conjugate of $U$ and that the spinors $\epsilon_L$, $\epsilon_R$ are chiral projections of a Majorana spinor $\epsilon=\epsilon_L + \epsilon_R$ and thus related via complex conjugation. This, in turn, implies that the equations \eqref{eq:Killspinoreqs} are not independent but instead are each other's complex conjugate. 

We may assume that the Lagrangian \eqref{eq:Lrigid} and the supersymmetry transformation rules \eqref{eq:susytrafos} hold for both the Euclidean and Lorentzian cases as long as we assume  that the auxiliary background fields and Killing spinors obey the appropriate reality conditions. The analysis of the Killing spinor equations \eqref{eq:Killspinoreqs} that determines the allowed supersymmetric backgrounds depends more subtly on the signature of the background space-time and on the ensuing reality properties of the auxiliary background fields. The Lorentzian case was previously discussed in \cite{Liu:2012bi, Cassani:2012ri}\,\footnote{See \cite{Festuccia:2011ws, Dumitrescu:2012at} for analogous results in Euclidean signature.} and we briefly summarize some important parts and results of this analysis below.

When solving the Killing spinor equations \eqref{eq:Killspinoreqs}, it suffices to look for solutions that are commuting Majorana spinors. Note that the physical fermions $\{\chi_L,\chi_R\}$ are anti-commuting and that consequently the parameters $\epsilon_L$, $\epsilon_R$ of the supersymmetry transformations \eqref{eq:susytrafos} should be anti-commuting as well. Once one has however obtained a basis $\{\zeta^{(i)} = \zeta^{(i)}_L + \zeta^{(i)}_R | i = 1,\cdots,n\}$ of $n$ commuting Majorana solutions of \eqref{eq:Killspinoreqs}, one can use linearity of \eqref{eq:Killspinoreqs} to construct generic supersymmetry parameters $\epsilon$ as linear combinations of the $\zeta^{(i)}$, with constant, real Grassmann variables as coefficients:
\begin{equation}
  \epsilon = \sum_{i=1}^n \theta_{(i)} \zeta^{(i)} \,, \qquad \mathrm{with} \qquad \theta_{(i)} \theta_{(j)} = - \theta_{(j)} \theta_{(i)} \,,\ \theta_{(i)}^* = \theta_{(i)} \,.
\end{equation}
In the following, we will use the Greek letter $\zeta$ to denote commuting solutions of Killing spinor equations, while the letter $\epsilon$ will be reserved for the associated anti-commuting supersymmetry parameters.

Assuming the existence of commuting solutions of \eqref{eq:Killspinoreqs}, one can derive geometric restrictions that should be obeyed by backgrounds on which susy QFTs can be defined. One important restriction on the allowed backgrounds is that they admit a null Killing vector. Indeed, the existence of a non-trivial commuting Killing spinor $\zeta = \zeta_L + \zeta_R$ allows one to define the following real vector
\begin{equation}
    K^M = \rmi\, \bar{\zeta}\,\Gamma^M\zeta = 2 \rmi\, \bar{\zeta}_L \Gamma^M \zeta_R\,,\qquad\mathrm{which~obeys}\qquad K_MK^M=0\, \label{eq:Kkilling}
  \end{equation}  
as a consequence of Fierz relations. Moreover, using the Killing spinor equations \eqref{eq:Killspinoreqs}, one can show that \cite{Liu:2012bi}
\begin{align}\label{eq:Killing}
    \nabla_{(M}K_{N)}=0\qquad\mathrm{and}\qquad K_{[M}\nabla_NK_{P]} = 2\,\epsilon_{MNP}{}^QK_Q \, K^S V_S\,,
\end{align}
where $V_S$ is the real auxiliary vector of the Old Minimal supergravity multiplet. We thus see that $K^M$ is a null Killing vector, whose associated one-form $K_M = g_{MN} K^N$ is non-integrable (i.e. $K_{[M} \partial_N K_{R]} \neq 0$), unless $K_MV^M=0$.

More generally, given a basis $\{\zeta^{(i)} | i = 1,\cdots, n\}$ of commuting solutions of \eqref{eq:Killspinoreqs}, one can show that the vectors
\begin{equation}
  K^M_{(ij)} = \rmi \bar{\zeta}^{(i)} \Gamma^M \zeta^{(j)}\,,
\end{equation}
are Killing vectors and thus correspond to isometries of the background. The generators of these isometries determine the anti-commutators of the supercharges of the rigid superalgebra that is preserved by the background. Let us illustrate this in case there is one commuting solution $\zeta = \zeta_L + \zeta_R$ of the Killing spinor equation \eqref{eq:Killspinoreqs}. Associated to this solution, one can construct the supercharge $Q(\zeta)$, that generates supersymmetry transformations \eqref{eq:susytrafos}, whose parameters are of the form $\epsilon= \theta \zeta$, where $\theta$ is a real, constant Grassmann variable
\begin{equation}
  \delta(\epsilon= \theta \zeta) = \theta Q(\zeta) \,.
\end{equation}
Calculating the commutator of two such supersymmetry transformations $\delta(\epsilon_{1,2} = \theta_{1,2} \zeta)$ (with $\theta_{1,2}$ two independent anti-commuting variables) on the fields $Z$, $H$ and $\chi$, one obtains
\begin{equation}
  \left[\delta(\epsilon_1), \delta(\epsilon_2) \right] = -\frac{\rmi}{2} \theta_2 \theta_1 \mathcal{L}_{_K} \,,
\end{equation}
where $\mathcal{L}_{_K}$ is the Lie-Lorentz derivative \cite{Figueroa-OFarrill:1999klq, Ortin:2015hya} along the Killing vector $K^M$ defined in \eqref{eq:Kkilling}. This Lie-Lorentz derivative acts as an ordinary Lie derivative on the scalar fields $Z$, $H$ and as
\begin{equation}\label{eq:covLie}
     \mathcal{L}_{_K} \chi = K^M D_M \chi - \frac14 \left(D_A K_B\right) \Gamma^{AB}\chi\,,
\end{equation}
on the spin-1/2 fermionic field $\chi$. One thus sees that, in case there is only one solution $\zeta$ to the Killing spinor equations \eqref{eq:Killspinoreqs}, the part of the preserved rigid superalgebra that involves the associated single supercharge $Q(\zeta)$ is given by
\begin{equation}\label{eq:NMalgebra}
     \left\{Q(\zeta),Q(\zeta)\right\} = -\frac{\rmi}{2}\mathcal{L}_{_K}\qquad\mathrm{and}\qquad\left[Q(\zeta),\,\mathcal{L}_{_K}\right]=0\,.
   \end{equation}
 Equations \eqref{eq:Kkilling} and \eqref{eq:Killing} show that a necessary condition for a background to allow for supersymmetry is the existence of a (globally defined) null Killing vector. Hence the set of product manifolds
 \begin{equation}
  \mathbb{R}^{1,1}\times\mathcal{M}_2\,,
  \end{equation}
  with $\mathcal{M}_2$ being an arbitrary two-manifold, provides a large class of candidate solutions. There are other known consistent backgrounds that do not fall into this class. Two such backgrounds, that preserve maximal supersymmetry, are given by AdS$_4$ and $\mathbb{R} \times \mathbb{S}^3$. The Euclidean versions of these backgrounds have been constructed in \cite{Festuccia:2011ws}. The AdS$_4$ case was also discussed for Lorentzian signature in \cite{Liu:2012bi}. The $\mathbb{R}\times \mathbb{S}^3$ background is an example where the one-form $K_M$ is not integrable. We will consider non-relativistic supersymmetric manifolds that are reminiscent of these backgrounds in section \ref{ssec:examples}.

\section{Non-Relativistic Geometry from Relativistic Geometry}

\noindent In order to obtain a matter-coupled non-relativistic susy QFT in three dimensions-- given by a Lagrangian, supersymmetry transformations and appropriate Killing spinor equations for the supersymmetry parameters--we apply a dimensional reduction along a lightlike isometry. As we saw above, any background that admits at least one solution of the Killing spinor equations \eqref{eq:Killspinoreqs}, has a null Killing vector $K^M$. We can describe the background geometry in coordinates that are adapted to this null Killing vector: $x^M = \{x^\mu, \mathsf{v}\}$, with $\mu = 0,1,2$, such that $K^M \partial_M = \partial_{\mathsf{v}}$. In these coordinates, the most general metric for which $K^M$ is a null Killing vector, can be described in terms of the following (inverse) Vielbein $E_M{}^A$ ($E^M{}_A$):
\begin{equation}
  \label{eq:vielbeinansatz}
  E_M{}^A = \bordermatrix{& a & - & + \cr
\mu & e_\mu{}^a &  \tau_\mu & - m_\mu \cr
\mathsf{v} & 0 & 0 & 1} \,, \qquad \qquad  E^M{}_A = \bordermatrix{& \mu & \mathsf{v} \cr
a & e^{\mu}{}_a & e^\mu{}_a m_\mu \cr
- & \tau^\mu &  \tau^\mu m_\mu \cr
+ & 0 & 1 }\,,
\end{equation}
where the flat indices $A = \{a,+,-\}$ refer to a null basis.
The $e^\mu{}_a$, $\tau^\mu$ that appear in the Ansatz for the inverse Vielbein $E^M{}_A$ are projective inverses of $e_\mu{}^a$, $\tau_\mu$, i.e.~they obey
\begin{align}
  \label{eq:invetau}
  & \tau^\mu \tau_\mu = 1 \,, \qquad \tau^\mu e_\mu{}^a = 0 \,, \qquad e^\mu{}_a \tau_\mu = 0 \,, \nonumber \\ & e^\mu{}_a e_\mu{}^b = \delta_a^b \,, \qquad e^\mu{}_a e^a{}_\nu = \delta^\mu_\nu - \tau^\mu \tau_\nu \,.
\end{align}
The $e_\mu{}^a$, $\tau_\mu$ and $m_\mu$ are independent of the $\mathsf{v}$-coordinate for $K^M$ to be a Killing vector. The form of the Vielbein \eqref{eq:vielbeinansatz} then corresponds to the Vielbein Ansatz that is used when performing a null reduction of the Einstein equations \cite{Duval:1984cj,Julia:1994bs}. The local space-time symmetries that are preserved in such a reduction, are given by the little group of the null Killing vector $K^M$. The Lie algebra of the little group of a null vector is given by the Bargmann algebra, the central extension of the algebra of Galilean space-time symmetries. One thus finds that local inertial frames in the lower-dimensional geometry are connected via Bargmann symmetries or in other words that the lower-dimensional geometry is Newton-Cartan. The quantities $e_\mu{}^a$ and $\tau_\mu$ then correspond to the spatial Vielbein and time-like Vielbein of a three-dimensional Newton-Cartan geometry. The field $m_\mu$ is a gauge field for the Bargmann U(1)-central charge symmetry with parameter $\beta$:
\begin{align} \label{eq:centralchargem}
    \delta m_\mu = \partial_\mu \beta\,.
\end{align}
From the null reduction viewpoint, this symmetry can be seen as stemming from infinitesimal diffeomorphisms in the $\mathsf{v}$-direction and its associated conserved charge is given by mass/particle number conservation. The field $m_\mu$ is a crucial ingredient in the Vielbein formulation of Newton-Cartan geometry \cite{Duval:1984cj}. Starting from the Vielbein Ansatz \eqref{eq:vielbeinansatz}, one can reduce other geometric quantities, such as the spin connection. Results for this are collected in appendix \ref{sec:NullReduction}.

The auxiliary scalar $U$ and vector field $V_M$ are also taken as independent of the $\mathsf{v}$-coordinate. We will rename
\begin{equation}
  u \equiv U \,,
\end{equation}
to distinguish the three-dimensional scalar $u$ from the four-dimensional one $U$.
It is convenient to redefine the reduced vector field $V_M$ as follows
\begin{align} \label{eq:defvmuv}
    v_\mu \equiv V_\mu + m_\mu\,V_{\mathsf{v}} = V_\mu + m_\mu v\,,\qquad v \equiv V_{\mathsf{v}}\,.
\end{align}
In this way, $v_\mu$ and $v$ are inert under the U(1)-central charge with parameter $\beta$, as are $e_\mu{}^a$, $\tau_\mu$ and $u$.

For future reference, we note that Galilean boosts with infinitesimal parameter $\lambda^a$ act as follows on the Newton-Cartan (inverse) Vielbeine and central charge gauge field:
\begin{alignat}{3} \label{eq:boosttrafosvielbs}
  \delta \tau_\mu &= 0 \,, \qquad & \delta e_\mu{}^a &= \lambda^a \tau_\mu \,, \qquad & \delta m_\mu &= - \lambda^a e_{\mu a} \,, \nonumber \\
  \delta \tau^\mu &= - \lambda^a e^\mu{}_a \,, \qquad & \delta e^\mu{}_a &= 0 \,.
\end{alignat}
The fields $u$ and $v$ are inert under boosts, while $v_\mu$ transforms as
\begin{align} \label{eq:boosttrafovmu}
  \delta v_\mu = -\lambda_a e_\mu{}^a v \,.
\end{align}

We will regularly turn three-dimensional lower indices $\mu$, $\nu$ on tensors into flat indices $0$, $a$, $(a=1,2)$, according to the rule
\begin{align}
  X_0 = \tau^\mu X_\mu \,, \qquad X_a = e^\mu{}_a X_\mu \,.
\end{align}
The $a$ index can be freely raised and lowered using a Kronecker delta. We will take $X^0 = - X_0$.
\subsection{Scherk-Schwarz Null Reduction}
The easiest way to perform the null reduction for the matter multiplet consists of using the Ansatz \eqref{eq:vielbeinansatz} and assuming that the (anti-)chiral multiplet fields are $\mathsf{v}$-independent. It is easy to see that this leads to a Lagrangian without time derivatives for the physical scalars, such that these scalars obey Poisson-type equations of motion. We will not discuss this case further; instead we will focus on a reduction that leads to dynamical fields that obey Schr\"odinger-type equations of motion. This can be achieved by performing a {\it twisted} or {\it Scherk-Schwarz} reduction \cite{Scherk:1979hw}. Such a reduction can be applied whenever the higher-dimensional theory has a global symmetry. One can then propose an Ansatz in which the higher-dimensional fields are expressed as symmetry transformations of the lower-dimensional fields, where the symmetry transformations depend on the internal coordinates. Invariance of the higher-dimensional theory under the symmetry then guarantees that this is a consistent reduction Ansatz, i.e.~that the dependence on the internal coordinates drops out when plugging the Ansatz into the higher-dimensional quantities.

In order to perform the Scherk-Schwarz reduction, we will assume that the Lagrangian \eqref{eq:Lrigid} exhibits the following global U$(1)$-symmetry, with parameter $\alpha$:
\begin{alignat}{3}
  \label{eq:globalU1}
  \delta Z &= \rmi\, \alpha\, Z \,, \qquad \qquad & \delta \chi_L &= \rmi\, \alpha\, \chi_L \,, \qquad \qquad & \delta H &= \rmi\, \alpha\, H \,.
\end{alignat}
This happens when the superpotential $W$ is zero and we will thus take $W=0$ from now on.\footnote{Note that choosing $W=0$ excludes interesting interaction terms. This restriction can however be lifted by e.g. introducing extra chiral multiplets such that a U$(1)$-invariant superpotential can be engineered. This was for instance done in \cite{Auzzi:2019kdd} to obtain an interacting non-relativistic Wess-Zumino model in flat space via Scherk-Schwarz null reduction of a relativistic one.} We can then use this U$(1)$-symmetry to perform the twisted null reduction. We thus propose the following Ansatz for the bosonic chiral multiplet fields in terms of three-dimensional scalars $z(x^\mu)$, $h(x^\mu)$:
\begin{alignat}{2}
  \label{eq:twistansatzbos}
  Z(x^\mu,\mathsf{v}) &= \rme^{-\rmi \, m \, \mathsf{v}} z(x^\mu)\,, \qquad \qquad &  H(x^\mu,\mathsf{v}) &= \rme^{-\rmi \, m \, \mathsf{v}} h(x^\mu) \,.
\end{alignat}
In order to give the reduction Ansatz for the fermion $\chi_L$, $\chi_R$, we adopt a decomposition of the four-dimensional Clifford algebra in terms of the three-dimensional one, discussed in appendix \ref{sec:NullReduction}. We then propose the following reduction Ansatz 
\begin{subequations}
\begin{align}
    \chi_L(x^\mu,\mathsf{v}) &= \rme^{-\rmi \,m\,\mathsf{v}}\Bigl(\pi\psi_+(x^\mu)\otimes\varphi_- + \bar{\pi}\psi_-(x^\mu)\otimes\varphi_+\Bigr),\\[.2truecm]
    \chi_R(x^\mu,\mathsf{v}) &=\rme^{+\rmi\,m\,\mathsf{v}}\Big( \bar{\pi}\psi_+(x^\mu)\otimes\varphi_- + \pi\psi_-(x^\mu)\otimes\varphi_+\Big).
\end{align}
\label{eq:twistansatzferm1}
\end{subequations}
Here $\psi_\pm$ are three-dimensional Majorana spinors (obeying the three-dimensional Majorana condition $\psi^*_\pm = \rmi\mathcal{C}_3\gamma^0\psi_\pm$) and $\varphi_+ = (1,\,0)^T$ and $\varphi_- = (0,\,1)^T$ obey
\begin{equation}\sigma_\pm\varphi_\pm = 0\hskip .5truecm \textrm{and}\hskip .5truecm  \sigma_\pm\varphi_\mp = \sqrt{2}\,\varphi_\pm\,,
\end{equation}
with the matrices $\sigma_\pm$ defined in \eqref{eq:defsigmapm}.
In \eqref{eq:twistansatzferm1}, we have used the three-dimensional operators $\pi$, $\bar{\pi}$, that are defined as
\begin{equation}
    \pi = \frac12\big(\mathds{1}_2 - \rmi\gamma_0\big)\qquad \text{and}\qquad \bar{\pi} = \frac12\big(\mathds{1}_2 + \rmi\gamma_0\big)\,. \label{eq:quasichiralproj}
\end{equation}
Since $\big(\rmi\gamma_0\big)^2=\mathds{1}_2$ these operators are projectors, that satisfy  $\rmi\gamma_0\pi = -\pi$ and $\rmi\gamma_0\bar{\pi} = \bar{\pi}$. With a slight abuse of terminology, we will refer to $\{\pi\psi_+,\bar{\pi}\psi_-\}$ as (pseudo-)left-handed fermions and to $\{\bar{\pi}\psi_+,\pi\psi_-\}$ as (pseudo-)right-handed fermions, alluding to their four-dimensional origin. Note that these pseudo-right-handed and pseudo-left-handed fermions are no longer Majorana, but are instead complex one-component spinors.

As mentioned above, when performing the null reduction, one finds that the lower-dimensional local symmetries span the Bargmann algebra, that includes local spatial rotations, local Galilean boosts and a local U$(1)$-central charge transformation, for which $m_\mu$ is a gauge field (see \eqref{eq:centralchargem}). This local U(1)-central charge that is associated to mass/particle number conservation acts on the three-dimensional fields $z(x)$, $\psi_\pm(x)$, $h(x)$ as follows:
\begin{align}\label{eq:partnumber}
  \delta_{\mathrm{U}(1)} z(x) &= \rmi\,m\,\beta \,z(x)\,,\qquad \qquad \delta_{\mathrm{U}(1)} \psi_\pm(x) = \pm m\,\beta \,\gamma_0 \,\psi_\pm(x)\,, \nonumber \\
  \delta_{\mathrm{U}(1)}h(x) &= \rmi\,m\,\beta\,h(x) \,.
  \end{align}
The reduction of the four-dimensional (anti-)chiral multiplet $\{Z,\chi_L, H\}$ ($\{\bar{Z},\chi_R, \bar{H}\}$) then leads to a three-dimensional pseudo-(anti-)chiral multiplet $\{z,\pi \psi_+, \bar{\pi}\psi_-, h\}$ ($\{\bar{z},\linebreak[1]\bar{\pi} \psi_+,\linebreak[1] \pi\psi_-,\linebreak[1] \bar{h}\}$).
In the following, we will use covariant derivatives $\bar \nabla_\mu$ in three dimensions, that are covariantized with respect to local rotations, Galilean boosts and the U(1)-transformations \eqref{eq:partnumber}. When acting on the physical fields of the three-dimensional pseudo-chiral multiplet, these derivatives are defined as follows:
\begin{subequations}
\begin{align} 
    \bar{\nabla}_\mu z &= \partial_\mu z - \rmi\,m\,m_\mu z\,,\\
    \bar{\nabla}_\mu \pi\psi_+ &= \partial_\mu\pi\psi_+ - \rmi\,m\,m_\mu\pi\psi_+ + \frac14\omega_\mu{}^{ab}\gamma_{ab}\pi\psi_+\,,\\
    \bar{\nabla}_\mu \bar{\pi}\psi_- &= \partial_\mu\bar{\pi}\psi_- - \rmi\,m\,m_\mu\bar{\pi}\psi_- + \frac14\omega_\mu{}^{ab}\gamma_{ab}\bar{\pi}\psi_- - \frac{\rmi\sqrt2}{2}\omega_\mu{}^a\gamma_a\pi\psi_+\,,
\end{align}
\label{eq:3covdev}\end{subequations}
where the spin connections $\omega_\mu{}^{ab}$, $\omega_\mu{}^a$ for local spatial rotations and Galilean boosts depend on $\tau_\mu$, $e_\mu{}^a$, $m_\mu$. Their explicit expressions can be found in eq.\,\eqref{eq:defNCconnstaumunu}. Similar expressions can be obtained by complex conjugation for the fields of the pseudo-anti-chiral multiplet.

\subsection{Multiplets and Lagrangian}
In this section, we will construct an explicit example of a non-relativistic susy QFT, coupled to an arbitrary curved Newton-Cartan background. The resulting theory is a supersymmetric extension of a field theory for a scalar, that obeys a curved space Schr\"odinger equation. The fermions obey a Levy-Leblond equation \cite{LevyLeblond:1967zz}--which can be seen as the square root of the Schr\"odinger equation, similar to how the Dirac equation can be viewed as the square root of the Klein-Gordon equation. It has been proposed recently \cite{Auzzi:2019kdd}, that an interacting version of this theory in flat space is one-loop exact. We opted to consider only one pseudo-chiral multiplet for pedagogical reasons. The generalization to an arbitrary number of pseudo-chiral multiplets, with arbitrary K\"ahler potentials and potentially non-zero superpotentials, is straightforward.

Applying the above Ans\"atze to eq. \eqref{eq:Lrigid}, we find the following Lagrangian 
\begin{align}\label{eq:3dact}
    \lr\det(e^a,\tau)\rr^{-1}\mathcal{L} &= \lr\frac{1}{12}\, \epsilon^{ab} R_{ab}\lr J\rr - \frac13\,\tau^{a0}\tau^{a0} + \frac19\,\bar{u}u - \frac19\,h^{\mu\nu} v_\mu v_\nu + \frac29\,v\tau^\mu v_\mu\rr z \bar{z}\notag\\
    &\,-h^{\mu\nu}\bar{\nabla}_\mu z\,\bar{\nabla}_\nu \bar{z} + \rmi\,m\,\tau^\mu\lr \bar{z}\bar{\nabla}_\mu z - z\bar{\nabla}_\mu \bar{z}\rr + h\bar{h}\notag\\
    &\,-e^{\mu a}\bar{\psi}_-\gamma_a\bar{\nabla}_\mu\psi_+ - e^{\mu a}\bar{\psi}_+\gamma_a\bar{\nabla}_\mu\psi_- + \sqrt2\,\tau^\mu\bar{\psi}_+\gamma_0\lr\bar{\nabla}_\mu-\frac16\,v_\mu\gamma_0\rr\psi_+\notag\\
    &\,-m\sqrt2\,\bar{\psi}_-\psi_- + \frac{\sqrt2}{8}\lr\tau^{ab}\epsilon_{ab} - \frac43\,v\rr\bar{\psi}_-\psi_- +\frac12\lr \tau^{a0} + \frac23\epsilon^{ab}v^b\rr\bar{\psi}_+\gamma_a\psi_-\notag\\
    &\,+\frac13\left(\bar{u}\,\bar{z}\,h + u\,z\,\bar{h}\right) +\frac{\rmi}{3}\lr h^{\mu\nu}v_\nu- v\tau^\mu\rr\lr\bar{z}\bar{\nabla}_\mu z - z\bar{\nabla}_\mu\bar{z}\rr - \frac23\,v_0\,m\,z \bar{z} \,.
\end{align}
Here, the notation $\det(e^a,\tau)$ refers to the determinant of a $(3\times 3)$-matrix, obtained by putting $e_\mu{}^a$ and $\tau_\mu$ in its columns. We have also defined the so-called spatial metric of Newton-Cartan geometry $h^{\mu\nu}$ as $h^{\mu\nu} = e^{\mu a} e^{\nu}{}_a$. The notation $\tau_{ab}$, resp. $\tau_{0a}$ refers to the spatial, resp. time-like parts of the curl of $\tau_\mu$
\begin{equation}
  \tau_{ab} = 2 e^\mu{}_a e^\nu{}_b \partial_{[\mu} \tau_{\nu]} \,, \qquad \qquad \tau_{0a} = 2 \tau^\mu e^\nu{}_a \partial_{[\mu} \tau_{\nu]} \,.
\end{equation}
The curvature of spatial rotations $R_{\mu\nu}(J)$ that appears in the first term is defined in eq. \eqref{eq:Jcurv}.

The reduction of the four-dimensional supersymmetry transformation rules leads to the following supersymmetry transformation rules for the pseudo-(anti-)chiral multiplet
\begin{align}\label{eq:3dtrans}
    \delta z &= \bar{\epsilon}_+\bar{\pi}\psi_- + \bar{\epsilon}_-\pi\psi_+\,,\notag\\
    \delta \pi\psi_+ &= \frac12\,e^{\mu a}\gamma_a\bar{\pi}\epsilon_+\bar{\nabla}_\mu z + \frac12\,h\pi\epsilon_+ + \frac{\rmi\,m}{\sqrt2}\,z\,\gamma_0\pi\epsilon_-\,,\notag\\
    \delta \bar{\pi}\psi_- &= \frac12\,e^{\mu a}\gamma_a\pi\epsilon_-\bar{\nabla}_\mu z + \frac12\,h\bar{\pi}\epsilon_- - \frac{1}{\sqrt2}\,\tau^\mu\,\gamma_0\bar{\pi}\epsilon_+\bar{\nabla}_\mu z\,,\\
    \delta h &= e^{\mu a}\bar{\epsilon}_-\gamma_a\lr\bar{\nabla}_\mu - \frac16 \,v_\mu\gamma_0\rr\pi\psi_+ + e^{\mu a}\bar{\epsilon}_+\gamma_a\lr\bar{\nabla}_\mu + \frac16 \,v_\mu\gamma_0\rr\bar{\pi}\psi_- \notag\\
    &\,-\sqrt2\,\tau^\mu\,\bar{\epsilon}_+\gamma_0\lr\bar{\nabla}_\mu - \frac16 \,v_\mu\gamma_0\rr\pi\psi_+ - \frac{u}{3}\lr\bar{\epsilon}_+\bar{\pi}\psi_- + \bar{\epsilon}_-\pi\psi_+\rr\notag\\
    &\,+m\sqrt2\,\bar{\epsilon}_-\bar{\pi}\psi_- -\frac{\sqrt2}{8}\lr\tau^{ab}\epsilon_{ab} - \frac43\,v\rr\bar{\epsilon}_-\bar{\pi}\psi_-\notag\\
    &\,- \frac14\,\tau^{a0}\lr \bar{\epsilon}_-\gamma_a\pi\psi_+ + 3\,\bar{\epsilon}_+\gamma_a\bar{\pi}\psi_-\rr\,.\notag
\end{align}
Here, it is understood that $(\epsilon_+,\epsilon_-)$ solves the Killing spinor equations, given in the next section. The Lagrangian \eqref{eq:3dact} is then invariant under \eqref{eq:3dtrans} up to total derivatives, when using the modified rule for partial integration \eqref{eq:partialIntegration}.

\subsection{Killing Spinor Equations for Non-Relativistic Supersymmetry} \label{ssec:KSeqs}

In order to establish the coupling to concrete backgrounds, we consider the Killing spinor equations obtained from the null reduction of eqs.\,\eqref{eq:Killspinoreqs}. It is worth mentioning, that the supercharges $(\epsilon_+,\epsilon_-)$ have charge zero under the U(1)-central charge transformation, hence the reduction is to be understood as an ordinary null reduction. This leads to four independent equations, two of which are purely algebraic:
\begin{subequations}
    \begin{align}
        &4\, v\, \gamma_0\epsilon_++\tau^{ab}\gamma_{ab}\epsilon_+ =0\,, \label{eq:OMalg1}\\
        & v \gamma_0\epsilon_- - \frac{3}{4} \tau^{ab} \gamma_{ab} \epsilon_-  - \frac{3\sqrt{2}}{2} \tau^{a0} \gamma_{a0} \epsilon_+ + \sqrt2\, v_a \gamma^a \epsilon_+ \notag \\
        & \quad \ \  - \sqrt2\, \mathrm{Re}(u) \gamma_0 \epsilon_+ + \sqrt2 \, \mathrm{Im}(u) \epsilon_+=0\,, \label{eq:OMalg2}
    \end{align}
  \end{subequations}
and two of which are differential equations
\begin{subequations}
\begin{align}
\bar{\nabla}_\mu \epsilon_+ &=- \frac14 {\tau}_\mu{}^0 \epsilon_+ - \frac{\sqrt{2}}{4} {\tau}_\mu{}^a \gamma_{a0} \epsilon_-   - \frac12 v_\mu \gamma_0 \epsilon_+ +\frac16 e_\mu{}^a v_b \gamma_a \gamma^b \gamma_0 \epsilon_+ + \frac13 \tau_\mu v_0 \gamma_0 \epsilon_+ \nonumber\\
&\quad\, - \frac{\sqrt2}{6} \tau_\mu v_a \gamma^a \epsilon_- - \frac{\sqrt2}{6} v\, e_\mu{}^a  \gamma_a \epsilon_-  - \frac16 \mathrm{Re}(u)\,e_\mu{}^a \gamma_a \epsilon_+ - \frac{\sqrt{2}}{6}  \mathrm{Re}(u)\,\tau_\mu\gamma_0 \epsilon_-\nonumber \\
& \quad\, - \frac16\mathrm{Im}(u) e_\mu{}^a \gamma_{a0} \epsilon_+  - \frac{ \sqrt{2}}{6}  \mathrm{Im}(u) \tau_\mu\epsilon_-  \,, \label{eq:ucKSE1} \\
\bar{\nabla}_\mu \epsilon_- &=  +\frac14 {\tau}_\mu{}^0 \epsilon_- +\frac12 v_\mu \gamma_0 \epsilon_- - \frac16 e_\mu{}^a v_b \gamma_a \gamma^b \gamma_0 \epsilon_-  + \frac{\sqrt{2}}{6} e_\mu{}^a v_0 \gamma_a \epsilon_+ \nonumber\\
&\quad\, - \frac16 \mathrm{Re}(u)\,e_\mu{}^a \gamma_a \epsilon_- + \frac16\mathrm{Im}(u)\,e_\mu{}^a\gamma_{a0} \epsilon_-\,, \label{eq:ucKSE2}
\end{align}
\end{subequations}
where the covariant derivatives on $\epsilon_\pm$ are explicitly given by
\begin{align}
  \label{eq:covderepspm}
  \bar{\nabla}_\mu \epsilon_+ &=  \partial_\mu \epsilon_+ + \frac14 \omega_\mu{}^{ab} \gamma_{ab} \epsilon_+ \,, \nonumber \\
  \bar{\nabla}_\mu \epsilon_- &=  \partial_\mu \epsilon_- + \frac14 \omega_\mu{}^{ab} \gamma_{ab} \epsilon_- - \frac{\sqrt{2}}{2} \omega_\mu{}^a \gamma_{a0} \epsilon_+ \,.
\end{align}
This set of two algebraic and two differential Killing spinor equations is invariant under local Galilean boosts, under which the background fields and spin connections transform as in eqs. \eqref{eq:boosttrafosvielbs}, \eqref{eq:boosttrafovmu}, \eqref{eq:boosttrafoconns}, and under which $\epsilon_\pm$ transform as
\begin{align} \label{eq:boostepsilons}
  \delta \epsilon_+ &= 0 \,, \qquad \qquad \delta \epsilon_- = -\frac{\sqrt{2}}{2} \lambda^a \gamma_{a0} \epsilon_+ \,.
\end{align}
The boost invariance of this set of equations is slightly non-trivial. One can show that under boosts the second algebraic Killing spinor equation \eqref{eq:OMalg2} transforms to the first algebraic one \eqref{eq:OMalg1}. The first differential Killing spinor equation \eqref{eq:ucKSE1} transforms to the first algebraic one \eqref{eq:OMalg1}. The second differential Killing spinor equation \eqref{eq:ucKSE2} transforms to a combination of the first differential one \eqref{eq:ucKSE1} and the second algebraic one \eqref{eq:OMalg2}. While the inclusion of algebraic equations as part of the non-relativistic Killing spinor equations might seem strange at first, one sees that they are necessary to obtain a set of equations that is invariant under these local Galilean boosts.

It is worth comparing this null reduction of the Killing spinor equations with a reduction of the four-dimensional Killing spinor equations along a spatial isometry \cite{Dumitrescu:2012ha,Closset:2012ru}. Also in the latter case, dimensional reduction leads to a set of differential and a set of algebraic Killing spinor equations. In that case however, all three-dimensional Killing spinor equations are Lorentz-covariant on their own and the algebraic Killing spinor equations decouple from the differential ones in the sense that one only needs to consider the latter when determining which backgrounds admit Killing spinors. The underlying reason for this is that after spatial reduction, the Old Minimal supergravity multiplet gives a fully reducible representation of the three-dimensional super-Poincar\'e algebra and splits into the three-dimensional supergravity multiplet and an extra matter multiplet that can be truncated. The differential Killing spinor equations then correspond to the supersymmetry transformations of the gravitini of the off-shell supergravity multiplet. The algebraic ones on the other hand correspond to the supersymmetry transformation rules of the fermions of the matter multiplet and hence do not need to be considered when looking for suitable Killing spinors. 
	
	This conclusion changes when considering a reduction along a lightlike direction. In that case the four-dimensional supergravity multiplet reduces to an indecomposable reducible representation of the three-dimensional super-Bargmann algebra and no longer splits nicely into a three-dimensional supergravity multiplet and an extra matter multiplet. Fields that would sit in a matter multiplet upon spatial reduction no longer do so upon null reduction, as they can be linked by Galilean boosts to other supergravity multiplet fields. It is for this reason that the boost transformation of the differential Killing spinor equations leads to the algebraic ones and that we keep the algebraic equations in order to perform the most general analysis of which non-relativistic backgrounds preserve supersymmetry.

\section{Solutions}

In the above section, we found a set of algebraic and differential equations that the non-relativistic Killing spinors obey. One is able to define supersymmetry on a given background, whenever these Killing spinor equations in this background admit non-trivial, nowhere vanishing,\footnote{In practice, the requirement that the solution is nowhere vanishing is often automatic if the solution is non-trivial; see the discussion around equation \eqref{eq:simplepdeform}.} solutions. Indeed, in that case one can use these solutions as a basis for the supersymmetry parameters appearing in \eqref{eq:3dtrans}. Since some of the Killing spinor equations are partial differential equations, they do not exhibit non-trivial solutions for all possible backgrounds. The allowed backgrounds for instance have to comply with the integrability conditions for the differential Killing spinor equations and there might also be topological obstructions to the existence of suitable Killing spinors. In this section, we will investigate the constraints that backgrounds have to obey, such that non-trivial non-relativistic Killing spinors can be found. We will also give some examples of such backgrounds.

In identifying the allowed backgrounds, we will adapt techniques that are similar to the ones used in the relativistic four-dimensional case \cite{Festuccia:2011ws,Dumitrescu:2012ha,Dumitrescu:2012at,Liu:2012bi,Klare:2012gn,Cassani:2012ri} to the situation at hand, e.g. taking into account that we now also have algebraic Killing spinor equations. This will lead to conditions on the backgrounds that are necessary and sufficient for the Killing spinor equations to have non-trivial solutions. Necessary conditions can also be obtained from studying the integrability conditions for the Killings spinor equations. These integrability conditions are often useful for practical purposes, e.g. when analyzing particular backgrounds. For this reason, we have discussed them in detail in appendix \ref{sec:Integrability}. The analysis of the integrability conditions offers an alternative viewpoint to the results of subsections \ref{ssec:zetapluszero} and \ref{ssec:zetaplusneq0} and on top of that it provides some additional explicit formulas that are useful in the examples of subsection \ref{ssec:examples}.  

As in the four-dimensional case discussed in section \ref{sec:susy4d}, we will be interested in commuting solutions $(\zeta_+,\zeta_-)$ of \eqref{eq:OMalg1}--\eqref{eq:ucKSE2}. Given a basis of nowhere vanishing solutions $\left\{\left(\zeta_+^{(i)}, \zeta_-^{(i)}\right) | i = 1,\cdots,n\right\}$ (where $1 \leq n \leq 4$), the rigid supersymmetry parameters $(\epsilon_+ = \theta \zeta_+, \epsilon_-=\theta \zeta_-)$ can then be constructed by multiplying these basis solutions with arbitrary constant Grassmann parameters $\theta$. In order to find such a basis of commuting solutions $(\zeta^{(i)}_+, \zeta^{(i)}_-)$, let us first note that the first algebraic Killing spinor equation \eqref{eq:OMalg1} evaluated on a generic solution $(\zeta_+, \zeta_-)$, is equivalent to
\begin{align}
  \left(4\, v + \tau^{ab} \epsilon_{ab} \right) \zeta_+ = 0 \,.
\end{align}
This equation suggests that the search for solutions can be subdivided into a case in which one looks for solutions where $\zeta_+$ is identically zero and a case where $\zeta_+$ is not identically zero (but $4v + \tau^{ab} \epsilon_{ab}$ is). We will now discuss both cases in turn.

\subsection[The case $\zeta_+ = 0$]{The case \boldmath$\zeta_+ = 0$} \label{ssec:zetapluszero}

In this case, we are looking for Killing spinors of the form $(0,\zeta_-)$, where $\zeta_-$ solves the following remaining Killing spinor equations \eqref{eq:OMalg2}, \eqref{eq:ucKSE1}, \eqref{eq:ucKSE2}
\begin{subequations}
\begin{align}
  &  \left(\frac43 v  - \tau^{ab} \epsilon_{ab}\right) \gamma_0 \zeta_- = 0\,, \label{eq:algepsp01} \\
  & \left(\frac32 \tau_\mu{}^a \gamma_{a0}  + e_\mu{}^a v \gamma_a  + \tau_\mu v^a \gamma_a  + \mathrm{Re}(u) \tau_\mu \gamma_0 + \mathrm{Im}(u) \tau_\mu\right) \zeta_- = 0 \,, \label{eq:algepsp02} \\
  & D_\mu \zeta_- = \Big(\frac14 {\tau}_\mu{}^0  +\frac12 v_\mu \gamma_0  - \frac16 e_\mu{}^a v_b \gamma_a \gamma^b \gamma_0   - \frac16 \mathrm{Re}(u)\,e_\mu{}^a \gamma_a  \nonumber \\ & \qquad \qquad \ \, + \frac16\mathrm{Im}(u)\,e_\mu{}^a\gamma_{a0} \Big) \zeta_-\,, \label{eq:diffepsp0}
\end{align}
\end{subequations}
with
\begin{align}
  D_\mu \zeta_- = \partial_\mu \zeta_- + \frac14 \omega_\mu{}^{ab} \gamma_{ab} \zeta_- \,.
\end{align}
Note in particular that the first differential Killing spinor equation \eqref{eq:ucKSE1} has turned into an algebraic equation.

Before discussing the constraints on the background geometry and auxiliary fields that follow from requiring the existence of non-trivial solutions of eqs. \eqref{eq:algepsp01}--\eqref{eq:diffepsp0}, let us first note that one can reasonably assume that any non-trivial solution $\zeta_-$ of these equations is nowhere vanishing. Indeed, the differential equation \eqref{eq:diffepsp0} is of the form
\begin{align} \label{eq:simplepdeform}
  \partial_\mu \zeta_- = B_\mu \zeta_- \,,
\end{align}
where $B_\mu$ is a Clifford algebra valued operator that involves geometric quantities and auxiliary fields. Suppose then that there exists a point $p$, where $\zeta_-$ is zero ($\zeta_-|_p = 0$). Equation \eqref{eq:simplepdeform} then implies that also $\partial_\mu \zeta_- |_p = 0$. Similarly, by taking successive partial derivatives of \eqref{eq:simplepdeform}, one can iteratively infer that all partial derivatives of $\zeta_-$ vanish at $p$. If $\zeta_-|_p = 0$, we thus find that the Taylor series of $\zeta_-$ around $p$ vanishes identically and consequently, assuming reasonable analyticity properties for $\zeta_-$, that $\zeta_-$ is given by the trivial zero solution. Non-trivial solutions for $\zeta_-$ can therefore be assumed to be nowhere vanishing and we will do so in the following.

With this in mind, we can discuss the conditions under which the equations \eqref{eq:algepsp01}--\eqref{eq:diffepsp0} admit non-trivial solutions. We can phrase these conditions in the form of the following theorem, which is the basic result of this subsection.
\begin{theorem}\label{thm1}
  The equations \eqref{eq:algepsp01}--\eqref{eq:diffepsp0} have one non-trivial globally well-defined solution for $\zeta_-$ if and only if there exists a globally well-defined unit vector $X^-_a$ such that the following conditions hold:{\footnote{Note that $v^{b}=e^{\mu b}v_{\mu}$, which appears on the right-hand side of condition \eqref{eq:soltau0a1}, is fully determined by condition \eqref{eq:solv1}. Here we do not substitute its explicit expression for brevity.}}
  \begin{subequations}
  \begin{align}
    \epsilon^{ab} \tau_{ab} &= \frac43 v \label{eq:vistauab} \,, \\
    \tau_{0a} &= \frac23 \left(-\epsilon_{ab} v^b + \mathrm{Re}(u) X^-_a - \mathrm{Im}(u) Y^-_a \right) \label{eq:soltau0a1} \,, \\
    v_\mu &= \tau_\mu Y^{-a} D_0 X^-_a + \frac12 e_\mu{}^a \left(3 Y^{-b} D_a X^-_b + \mathrm{Re}(u) Y^-_a + \mathrm{Im}(u) X^-_a \right) \label{eq:solv1} \,,
  \end{align}
  \end{subequations}
  where $Y^-_a = \epsilon_{ab} X^{-b}$ and $D_\mu X^-_a = \partial_\mu X^-_a + \omega_{\mu a}{}^b X^-_b$. There are two independent globally well-defined solutions for $\zeta_-$ if and only if there exists a globally well-defined unit vector $X^-_a$ such that the conditions \eqref{eq:vistauab}--\eqref{eq:solv1} hold with $u=0$.
\end{theorem}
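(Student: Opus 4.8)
The plan rests on one structural observation: a \emph{nowhere-vanishing} three-dimensional Majorana spinor $\zeta_-$ trivialises the two-dimensional spin space, since $(\rmi\gamma_0)^2=\mathds{1}_2$ forces $\gamma_0\zeta_-$ to be nowhere proportional to $\zeta_-$, so that $\{\zeta_-,\gamma_0\zeta_-\}$ is a pointwise basis. Equivalently, $\zeta_-$ is determined, up to one overall real sign, by its (nowhere-vanishing) squared norm $\rho$ and by the unit vector $X^-_a$ assembled from the bilinear $\bar{\zeta}_-\gamma_a\zeta_-$; a three-dimensional Fierz identity gives $X^-_aX^{-a}=1$, so $X^-_a$ and $Y^-_a:=\epsilon_{ab}X^{-b}$ form an orthonormal spatial frame. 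First I would record the ``completeness'' relations $\gamma_a\zeta_- = X^-_a\,\zeta_- - Y^-_a\,\gamma_0\zeta_-$ and $\gamma_{a0}\zeta_- = -Y^-_a\,\zeta_- - X^-_a\,\gamma_0\zeta_-$ (immediate from the basis statement), together with the fact that passing from $\zeta_-$ to $\gamma_0\zeta_-$ flips $X^-_a\to-X^-_a$. After that the theorem is bookkeeping; I would prove the ``one solution'' equivalence in both directions and then bootstrap the ``two solutions'' refinement.

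For necessity, let $(0,\zeta_-)$ be a nowhere-vanishing solution. Since $\gamma_0\zeta_-\neq0$, the algebraic equation \eqref{eq:algepsp01} is equivalent to the vanishing of its scalar coefficient, i.e.\ to \eqref{eq:vistauab}. Contracting \eqref{eq:algepsp02} once with $\tau^\mu$ and once with $e^\mu{}_a$, and expanding each result in $\{\zeta_-,\gamma_0\zeta_-\}$ via the completeness relations, the $e^\mu{}_a$-projection collapses back to \eqref{eq:vistauab} (in two dimensions an antisymmetric $\tau_{ab}$ equals its own $\epsilon_{ab}$-trace), while the $\tau^\mu$-projection reconstructs $\tau_{0a}$ in the frame $\{X^-_a,Y^-_a\}$ and is equivalent to \eqref{eq:soltau0a1}. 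Finally, using \eqref{eq:diffepsp0}: since $X^-_a$ is a unit vector one has $D_\mu X^-_a = A_\mu Y^-_a$ for a one-form $A_\mu$, and computing $A_\mu=Y^{-a}D_\mu X^-_a$ from the right-hand side of \eqref{eq:diffepsp0} expresses it through $v_\mu$ and $u$; solving for $v_\mu$, taking the $\tau^\mu$-trace to eliminate $v_0$, yields \eqref{eq:solv1}. The component of \eqref{eq:diffepsp0} along $\zeta_-$ itself only determines $\partial_\mu\ln\rho$ and is automatically obeyed by the norm of a genuine solution.

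For sufficiency, given a globally well-defined unit $X^-_a$ obeying \eqref{eq:vistauab}--\eqref{eq:solv1}, I would build the spinor directly. A global unit section of the rank-two spatial bundle trivialises its $\mathrm{SO}(2)$-structure, so (for a choice of spin structure) there is a global unit Majorana spinor $\hat{\zeta}_-$ whose bilinear reproduces $X^-_a$, unique up to sign. Running the necessity computations backwards, \eqref{eq:algepsp01} and \eqref{eq:algepsp02} hold for $\hat{\zeta}_-$ exactly because \eqref{eq:vistauab} and \eqref{eq:soltau0a1} do, and the $\gamma_0\zeta_-$-component of \eqref{eq:diffepsp0} holds because of \eqref{eq:solv1}. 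What remains is the $\zeta_-$-component of \eqref{eq:diffepsp0}, a single first-order equation $\partial_\mu\ln\rho=(\text{explicit one-form})$ for the norm. I expect that inserting \eqref{eq:soltau0a1} makes this one-form simplify sharply, being the simpler subcase when $u=0$, and that in general it is closed (hence, modulo the usual analyticity/topology caveats, exact) as a consequence of \eqref{eq:soltau0a1} and the Newton-Cartan structure equations; then $\zeta_- = \sqrt{\rho}\,\hat{\zeta}_-$ solves everything. Verifying this closedness is the one genuinely computational step, and I expect it to be the main obstacle.

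For the refinement, I would use that \eqref{eq:diffepsp0} is a linear first-order system: parallel transport along curves is invertible, so any two solutions independent as sections are independent at every point, hence span the whole (two-dimensional) spin space everywhere; therefore \eqref{eq:algepsp01}--\eqref{eq:algepsp02} must hold for \emph{all} spinors, i.e.\ the Clifford-valued coefficient operator in \eqref{eq:algepsp02} vanishes identically. Using that $\gamma_{a0}$ is a combination of the spatial $\gamma_b$ and not of $\mathds{1}_2$ or $\gamma_0$, the $\mathds{1}_2$- and $\gamma_0$-parts of that operator are $\mathrm{Im}(u)\tau_\mu$ and $\mathrm{Re}(u)\tau_\mu$; since $\tau_\mu$ is nowhere zero these force $\mathrm{Im}(u)=\mathrm{Re}(u)=0$, and the surviving content is \eqref{eq:vistauab}--\eqref{eq:solv1} with $u=0$. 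Conversely, assuming \eqref{eq:vistauab}--\eqref{eq:solv1} with $u=0$, take the solution $\zeta_-$ produced above and check that $\gamma_0\zeta_-$ is a second solution: with $u=0$ the algebraic equations no longer involve $X^-_a$ and so survive the sign flip $X^-_a\to-X^-_a$ of the bilinear of $\gamma_0\zeta_-$, while the right-hand side operator of \eqref{eq:diffepsp0} commutes with $\gamma_0$ once the $u$-terms are dropped, so $\gamma_0\zeta_-$ obeys \eqref{eq:diffepsp0} as well. As $\zeta_-$ and $\gamma_0\zeta_-$ are pointwise orthogonal and nowhere vanishing they are independent, and no third independent solution exists because the spin space is two-dimensional.
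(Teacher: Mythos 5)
Your strategy coincides with the paper's: necessity via the bilinears $N^-$, $X^-_a$, $Y^-_a$ and the relations $X^{-a}\gamma_a\zeta_-=\zeta_-$, $Y^{-a}\gamma_a\zeta_-=\gamma_0\zeta_-$ (the paper extracts \eqref{eq:soltau0a1} from $A^a\gamma_a\zeta_-=0$ by a determinant argument, you by expanding in the pointwise basis $\{\zeta_-,\gamma_0\zeta_-\}$ --- equivalent), sufficiency by reconstructing the spinor from $X^-_a$, and the two-solution case via $\gamma_0\zeta_-$ and the flip $X^-_a\to-X^-_a$. Your pointwise-spanning argument that two independent solutions of the linear first-order system force the algebraic operator in \eqref{eq:algepsp02} to vanish, hence $\mathrm{Re}(u)=\mathrm{Im}(u)=0$, is a clean variant of the paper's comparison of \eqref{eq:soltau0a1} with its sign-flipped counterpart evaluated on $\gamma_0\zeta_-^{(1)}$.

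The genuine gap is the sufficiency direction, which you do not actually establish: you reduce it to solvability of a residual scalar equation $\partial_\mu\ln\rho=(\text{explicit one-form})$ and only conjecture that this one-form is closed, calling its verification the main obstacle. In fact no integrability argument of that kind is needed, and this is precisely the computation your sketch omits: once \eqref{eq:vistauab}, \eqref{eq:soltau0a1} and \eqref{eq:solv1} are inserted into \eqref{eq:diffepsp0} (together with $X^{-a}\gamma_a\zeta_-=\zeta_-$ when $u\neq0$), the component of the right-hand side along $\zeta_-$ cancels identically, leaving $D_\mu\zeta_-=\frac12\,Y^{-a}D_\mu X^-_a\,\gamma_0\zeta_-$; the norm is therefore constant ($\partial_\mu N^-=0$), the spin-connection terms drop out, and what remains is $\partial_\mu\zeta_-=\frac12\left(X^-_2\partial_\mu X^-_1-X^-_1\partial_\mu X^-_2\right)\gamma_0\zeta_-$, which integrates explicitly to $\zeta_-=\exp\left(\frac12\arctan\left(X^-_1/X^-_2\right)\gamma_0\right)\zeta^-_0$, with the constant spinor $\zeta^-_0$ obeying $\gamma_2\zeta^-_0=\mathrm{sign}(X^-_2)\zeta^-_0$ when $u\neq0$ and unconstrained (two solutions) when $u=0$. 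So your anticipated closedness obstruction reduces to the tautology that the rotation angle of the unit vector $X^-_a$ is locally a well-defined function; what actually has to be checked is the cancellation of the $\zeta_-$-component, which your proposal leaves unproven. (Your appeal to a spin structure to produce the unit spinor $\hat\zeta_-$ is at the same level of global rigor as the paper's explicit formula, and the minor sign conventions in your completeness relations are immaterial; the rest of your outline goes through.)
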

\begin{proof}
In order to prove this theorem, let us first assume that one globally well-defined, nowhere vanishing, solution $\zeta_-^{(1)}$ of eqs. \eqref{eq:algepsp01}--\eqref{eq:diffepsp0} exists and let us show that this implies the conditions \eqref{eq:vistauab}--\eqref{eq:solv1}. Equation \eqref{eq:algepsp01}, evaluated on this solution, is equivalent to
\begin{align}
  \left(\frac43 v - \tau^{ab} \epsilon_{ab} \right) \zeta_-^{(1)} = 0 \,.
\end{align}
Since $\zeta_-^{(1)}$ is assumed to be nowhere vanishing, we thus see that \eqref{eq:vistauab} has to hold. We can then use this condition in equation \eqref{eq:algepsp02}, evaluated on $\zeta_-^{(1)}$. Doing this, one finds (after multiplication with $\tau^\mu$) that 
\begin{align} \label{eq:auxalgeq}
  \frac32 \epsilon^{ab} \tau_{0a} \gamma_b \zeta_-^{(1)} +  v^a \gamma_a \zeta_-^{(1)} + \mathrm{Re}(u) \gamma_0 \zeta_-^{(1)} + \mathrm{Im}(u) \zeta_-^{(1)} = 0 \,.
\end{align}
In order to proceed, we note that one can use the nowhere vanishing and globally well-defined solution $\zeta_-^{(1)}$ to construct the following bilinears
\begin{align} \label{eq:defNXY}
  N^- = \rmi \bar{\zeta}_-^{(1)} \gamma_0 \zeta_-^{(1)} \,, \qquad X^-_a = \frac{1}{N^-} \rmi \bar{\zeta}_-^{(1)} \gamma_{0a} \zeta_-^{(1)} \,, \qquad Y^-_a = \frac{1}{N^-} \rmi \bar{\zeta}_-^{(1)} \gamma_a \zeta_-^{(1)} \,.
\end{align}
Since $N^-$ is given by $-\left(\zeta_-^{(1)}\right)^\dag \zeta_-^{(1)}$, it is nowhere vanishing because $\zeta_-^{(1)}$ is. The vectors $X^-_a$ and $Y^-_a$ are thus well-defined. They are not independent; rather they are related by
\begin{align} \label{eq:XperpY}
  X^-_a = - \epsilon_{ab} Y^-_b \,.
\end{align}
Fierz identities moreover imply that $X^-_a$ and $Y^-_a$ are unit vectors (and thus nowhere vanishing)
\begin{align} \label{eq:XYunit}
  X^{-a} X^-_a = 1 \,, \qquad \qquad \qquad Y^{-a} Y^-_a = 1 \,,
\end{align}
and that they obey
\begin{align} \label{eq:propXY}
  X^{-a} \gamma_a \zeta_-^{(1)} = \zeta_-^{(1)} \,, \qquad \qquad Y^{-a} \gamma_a \zeta_-^{(1)} = \gamma_0 \zeta_-^{(1)} \,.
\end{align}
These properties can then be used to rewrite \eqref{eq:auxalgeq} as
\begin{align} \label{eq:Agammaeps}
  A^a \gamma_a \zeta_-^{(1)} = 0 \,, \qquad \quad \mathrm{where} \qquad \quad A^a = \frac32 \tau_{0b} \epsilon^{ba} + v^a + \mathrm{Re}(u) Y^{-a} + \mathrm{Im}(u) X^{-a} \,.
\end{align}
Since $\zeta^{(1)}_-$ is non-trivial, this equation expresses that the matrix $A^a \gamma_a$ is singular and thus that its determinant is zero. Since
\begin{align}
  \det(A^a \gamma_a)^2 = (A^a A_a)^2 \,,
\end{align}
we thus see that \eqref{eq:Agammaeps} implies that $A_a = 0$ or in other words that \eqref{eq:soltau0a1} holds. We can then use \eqref{eq:vistauab} and \eqref{eq:soltau0a1}, along with \eqref{eq:propXY} in the differential condition \eqref{eq:diffepsp0} on $\zeta_-^{(1)}$, leading to the following equation:
\begin{align}
  &  D_\mu \zeta_-^{(1)} = C^-_\mu \gamma_0 \zeta_-^{(1)} \,, \nonumber \\
  & \mathrm{where} \qquad C^-_\mu = \frac12 \tau_\mu v_0 + \frac13 e_\mu{}^a v_a - \frac16 \mathrm{Re}(u) e_\mu{}^a Y^-_a - \frac16 \mathrm{Im}(u) e_\mu{}^a X^-_a \,.
\end{align}
Using this equation and the definitions \eqref{eq:defNXY}, one can show that
\begin{align}
  \partial_\mu N^- = 0 \,, \qquad \mathrm{and} \qquad D_\mu X^-_a = 2 C^-_\mu Y^-_a \,.
\end{align}
The latter equation implies that
\begin{align} \label{eq:bigCmin}
  C_\mu^- = \frac12 Y^{-a} D_\mu X^-_a \,,
\end{align}
which can be rewritten as the third condition \eqref{eq:solv1}.

Similar steps can be taken to show that \eqref{eq:vistauab}--\eqref{eq:solv1} hold with $u=0$, when there exists a second solution $\zeta_-^{(2)}$ of eqs. \eqref{eq:algepsp01}--\eqref{eq:diffepsp0}, that is linearly independent from $\zeta_-^{(1)}$. Note that we can always write $\zeta_-^{(2)}$ as a linear combination of the eigenvectors $\{\zeta_-^{(1)}, \gamma_0 \zeta_-^{(1)}\}$ (with eigenvalues $+1$ and $-1$ resp.) of $X^{-a} \gamma_a$: $\zeta_-^{(2)} = a \zeta_-^{(1)} + b \gamma_0 \zeta_-^{(1)}$, with $a$, $b \in \mathbb{R}$ and $b\neq 0$. Linearity of the Killing spinor equations then implies that we can take
\begin{equation}
  \zeta_-^{(2)} = \gamma_0 \zeta_-^{(1)} 
\end{equation}
without loss of generality and we will adopt this choice in the following. Evaluating equation \eqref{eq:algepsp01} on this second solution then again leads to \eqref{eq:vistauab}. Considering equation \eqref{eq:algepsp02}, evaluated on $\zeta_-^{(2)}$, and performing manipulations similar to those that led to \eqref{eq:soltau0a1}, now implies that 
\begin{equation} 
\tau_{0a} = \frac23 \left(-\epsilon_{ab} v^b - \mathrm{Re}(u) X^-_a + \mathrm{Im}(u) Y^-_a \right)
\end{equation}
should hold along with \eqref{eq:soltau0a1}. This is only possible when $u=0$ and we thus find that \eqref{eq:soltau0a1} holds with $u=0$. Using \eqref{eq:vistauab}, \eqref{eq:soltau0a1} and $u=0$ in the differential condition \eqref{eq:diffepsp0}, evaluated on $\zeta_-^{(1)}$ (or, giving equivalent results, on $\zeta_-^{(2)} = \gamma_0 \zeta_-^{(1)}$), then leads to
\begin{align}
  &  D_\mu \zeta_-^{(1)} = c^-_\mu \gamma_0 \zeta_-^{(1)} \,, \nonumber \\
  & \mathrm{where} \qquad c^-_\mu = \frac12 \tau_\mu v_0 + \frac13 e_\mu{}^a v_a  \,.
\end{align}
The same reasoning that led to \eqref{eq:bigCmin} can then be used to show that
\begin{equation}
  c_\mu^- = \frac12 Y^{-a} D_\mu X^-_a \,,
\end{equation}
which is equivalent to \eqref{eq:solv1} with $u=0$.
This completes the proof that the existence of a non-trivial globally well-defined solution of the form $(0,\zeta_-)$ of the Killing spinor equations implies the existence of a globally well-defined unit vector $X^-_a$ such that eqs. \eqref{eq:vistauab}, \eqref{eq:soltau0a1} and \eqref{eq:solv1} hold, with $u=0$ in case two such solutions exist.

Let us now prove that the reverse statement also holds and assume that \eqref{eq:vistauab}--\eqref{eq:solv1} hold for a globally well-defined unit vector $X^-_a$. Note first that eq. \eqref{eq:algepsp01} is identically satisfied for any $\zeta_-$ when \eqref{eq:vistauab} holds. Using \eqref{eq:vistauab} and \eqref{eq:soltau0a1} in eq. \eqref{eq:algepsp02}, one finds that \eqref{eq:algepsp02}, after multiplication with $\tau^\mu$ reduces to
\begin{align}
  \left(\mathrm{Im}(u) + \mathrm{Re}(u) \gamma_0 \right) \left(\mathds{1}_2 - X^{-a} \gamma_a\right) \zeta_- = 0 \,.
\end{align}
If $u=0$, this equation is again identically satisfied for any $\zeta_-$. When $u\neq 0$, the matrix $\left(\mathrm{Im}(u) + \mathrm{Re}(u) \gamma_0 \right)$ is invertible and the above equation is equivalent to
\begin{align} \label{eq:zetaeigvector}
  X^{-a} \gamma_a \zeta_- = \zeta_- \,.
\end{align}
Since $X^{-a} \gamma_a$ is diagonalizable and has one eigenvalue $1$ and one eigenvalue $-1$, one sees that one can find one solution of this equation, given by an eigenvector with eigenvalue $1$. Note also that one can then recover \eqref{eq:defNXY} from \eqref{eq:zetaeigvector}, by multiplying both sides of \eqref{eq:zetaeigvector} from the left with $\bar{\zeta}_- \gamma_b$. In case  $u\neq 0$ and $\zeta_-$ obeys \eqref{eq:zetaeigvector}, plugging eqs. \eqref{eq:soltau0a1}, \eqref{eq:solv1} and \eqref{eq:zetaeigvector} into \eqref{eq:diffepsp0}, gives
\begin{align}
  D_\mu \zeta_- = \frac12 Y^{-a} D_\mu X^-_a \gamma_0 \zeta_- \,.
\end{align}
Similar manipulations show that this same equation also holds when $u=0$.
The spin connection terms in the covariant derivatives of this equation can be shown to cancel, so that one finds the following equation:
\begin{align}
  \partial_\mu \zeta_- = \frac12 Y^{-a} \partial_\mu X^-_a \gamma_0 \zeta_- = \frac12\left(X^-_2 \partial_\mu X^-_1 - X^-_1 \partial_\mu X^-_2\right)\gamma_0 \zeta_- \,.
\end{align}
This equation can be integrated to yield the solution\footnote{Here and in the following, we take the principal value of $\arctan$.}
\begin{align}
  \zeta_- = \exp\left(\frac12 \arctan{\left(\frac{X^-_1}{X^-_2}\right)}\gamma_0 \right) \zeta^-_0 \,,
\end{align}
where $\zeta^-_0$ is a constant spinor. For $u = 0$, this constant spinor is unconstrained, yielding two linearly independent solutions. In case $u\neq0$, $\zeta^-_0$ has to obey
\begin{align}
  \gamma_2 \zeta^-_0 = \mathrm{sign}(X^-_2) \zeta^-_0 \,,
\end{align}
to ensure that \eqref{eq:zetaeigvector} holds. One thus finds that there is only one solution when $u\neq 0$. In this way, we have shown that the conditions \eqref{eq:vistauab}--\eqref{eq:solv1} ensure that a solution of \eqref{eq:algepsp01}--\eqref{eq:diffepsp0} can be found. This solution is globally well-defined by virtue of the assumption that $X^-_a$ is globally well-defined, thus proving the theorem. 
\end{proof}
Note that we expressed the solution \eqref{eq:solv1} for $v_\mu$ in terms of the vectors $X_a^-$, $Y_a^-$, that are constructed from a Killing spinor. In case $u\neq 0$, this expression for $v_\mu$ is unambiguous, since there is only one solution $\zeta_-^{(1)}$ of the Killing spinor equations. In case $u=0$, there exist two independent Killing spinors $\zeta_-^{(1)}$ and $\gamma_0 \zeta_-^{(1)}$. Since there is no canonical choice of which Killing spinor to use to construct the vectors $X_a^-$, $Y_a^-$, one should make sure that the expression \eqref{eq:solv1} with $u=0$ does not depend on such a choice. This is indeed the case, as can be seen by taking an arbitrary linear combination
\begin{equation}
  \chi = a \zeta_-^{(1)} + b \gamma_0 \zeta_-^{(1)} \,, \qquad \qquad a, b \in \mathbb{R} \,,
\end{equation}
and defining
\begin{equation}
  N^\chi = \rmi \bar{\chi} \gamma_0 \chi \,, \qquad X^\chi_a = \frac{1}{N^\chi} \rmi \bar{\chi} \gamma_{0a} \chi \,, \qquad Y^\chi_a = \frac{1}{N^\chi} \rmi \bar{\chi} \gamma_a \chi \,.
\end{equation}
One finds that $X^\chi_a$ is still a unit vector and that moreover
\begin{equation}
  Y^{\chi a} D_\mu X_a^\chi = Y^{- a} D_\mu X_a^- \,,
\end{equation}
so that the expression \eqref{eq:solv1} for $v_\mu$ is indeed independent of the choice of Killing spinor, when $u=0$.

When put in a background that is subject to the relations \eqref{eq:vistauab}, \eqref{eq:soltau0a1}, and \eqref{eq:solv1}, the matter multiplet \eqref{eq:3dtrans} realizes a rigid superalgebra. The anti-commutator of the supercharges closes on bosonic symmetries of the theory, i.e. isometries and local Bargmann transformations. Let us denote the supercharges associated to solutions $(0,\zeta^{(i)}_-)$ ($i=1,2$) of the Killing spinor equations by $Q(\zeta^{(i)}_-)$. In case there is only one Killing spinor $(0,\zeta_-^{(1)})$, we find that $Q(\zeta^{(1)}_-)$ satisfies the following anti-commutation relation:
\begin{align}\label{eq:Q-Q-}
    \lc Q(\zeta^{(1)}_-),Q(\zeta^{(1)}_-)\rc = -\rmi\frac{\sqrt2}{2}\lr\delta_{\mathrm{U}(1)}(N^-) -\frac12\,\delta_G(\tau_{0a}N^-)\rr\,,
\end{align}
where $N^- = \rmi\bar\zeta_-^{(1)}\gamma_0\zeta_-^{(1)}$, as defined in eq. \eqref{eq:defNXY}. 
The transformation $\delta_{\mathrm{U}(1)}(N^-)$ corresponds to a central charge transformation with parameter $N^-$. This transformation was defined in eq. \eqref{eq:partnumber}. The transformation $\delta_G(\tau_{0a} N^-)$ corresponds to a local Galilean boost with parameter $\tau_{0a} N^-$. This boost acts non-trivially only on $\bar\pi\psi_-$ as follows
\begin{align}
  \delta_G(\tau_{0a} N^-) \bar \pi\psi_- &= -\frac{\rmi \sqrt{2}}{2} \tau_{0a} N^- \gamma^a \pi \psi_+ \,.
\end{align}
Let us now turn to the case, in which there is a second Killing spinor $(0,\zeta_-^{(2)})=(0,\gamma_0\zeta_-^{(1)})$. The anti-commutator of the supercharge $Q(\zeta_-^{(2)})$ with itself satisfies an anti-commutation relation that is formally the same as in eq. \eqref{eq:Q-Q-}. The mixed anti-commutator vanishes: $\lc Q(\zeta_-^{(1)}),Q(\zeta_-^{(2)})\rc =0$. Summarizing:
\begin{align}\label{eq:Q-Q-full}
    \lc Q(\zeta^{(i)}_-),Q(\zeta^{(j)}_-)\rc = -\rmi\,\delta^{ij}\,\frac{\sqrt2}{2}\lr\delta_{\mathrm{U}(1)}(N^-) -\frac12\,\delta_G(\tau_{0a}N^-)\rr\qquad \forall\, i,j=1,2\,.
\end{align}
Since the Killing spinors $(0,\zeta^{(i)}_-)$ do not carry $\mathrm{U}(1)$ charge and are inert under boosts, it is furthermore true that $\ls Q(\zeta^{(i)}_-),\lc Q(\zeta^{(j)}_-),Q(\zeta^{(k)}_-)\rc\rs=0$ (with $i,j,k=1,2$). The supercharges thus commute with the central charge symmetry and local Galilean boosts.

\subsection[The case $\zeta_+ \neq 0$]{The case \boldmath$\zeta_+ \neq 0$} \label{ssec:zetaplusneq0}

As mentioned in section \ref{ssec:KSeqs}, the Killing spinor equations \eqref{eq:OMalg1}--\eqref{eq:ucKSE2} are covariant with respect to local Galilean boosts. From \eqref{eq:boostepsilons} one sees that, in case $\zeta_+$ is not identically zero, one can completely fix this gauge freedom by setting $\zeta_- = 0$. Indeed, in case $\zeta_- \neq 0$ one can try to find a boost with parameters $\lambda^a$ such that 
\begin{align} \label{eq:boostfixeq}
  \zeta_- - \frac{1}{\sqrt{2}} \lambda^a \gamma_{a0} \zeta_+ = 0 \,, 
\end{align}
i.e. such that the boosted $\zeta_-$ is zero. Eq. \eqref{eq:boostfixeq} can be easily solved for the boost parameters $\lambda_a$ as follows
\begin{align}
  \lambda_a = \sqrt{2}\, \frac{\bar{\zeta}_+ \gamma_a \zeta_-}{\bar{\zeta}_+ \gamma_0 \zeta_+} \,.
\end{align}
Since $\bar{\zeta}_+ \gamma_0 \zeta_+ \propto \zeta_+^\dag \zeta_+ \neq 0$ for $\zeta_+ \neq 0$, this expression for the boost parameters is well-defined and one sees that one can indeed completely fix the boost gauge symmetry that the Killing spinor equations exhibit by setting $\zeta_- = 0$. In the following, we will assume that the boost gauge symmetry can be fixed in this way and we will look for solutions of the Killing spinor equations of the form $(\zeta_+, 0)$.\footnote{Strictly speaking, we are assuming here that $\zeta_+$ does not have any isolated zeros. In that case, one could not apply the boost gauge fixing $\zeta_- = 0$ at the positions of the zeros of $\zeta_+$. We will not discuss this possibility further here.}

Putting $\zeta_- = 0$ in the Killing spinor equations \eqref{eq:OMalg1}--\eqref{eq:ucKSE2} leads to the following equations:
\begin{subequations}
\begin{align}
  & \left(4 v + \epsilon^{ab} \tau_{ab} \right) \gamma_0 \zeta_+ = 0 \label{eq:algepspp1} \,, \\
  & \left(\frac32 \tau^{a0} \gamma_{a0} - \gamma^a v_a  + \mathrm{Re}(u) \gamma_0 - \mathrm{Im}(u)\right) \zeta_+ = 0 \label{eq:algepspp2} \,, \\
  & \left(\omega_\mu{}^a \gamma_{a0} + \frac13 e_\mu{}^a v_0 \gamma_a\right) \zeta_+ = 0 \label{eq:algepspp3} \,, \\
  & D_\mu \zeta_+ = \Big(-\frac14 \tau_{0\mu}  - \frac12 v_\mu \gamma_0  + \frac16 e_\mu{}^a v^b \gamma_a \gamma_b \gamma_0  + \frac13 \tau_\mu v_0 \gamma_0  \nonumber \\
  &\qquad \qquad \ \, - \frac16 \mathrm{Re}(u) e_\mu{}^a \gamma_a - \frac16 \mathrm{Im}(u) e_\mu{}^a \gamma_{a0} \Big)\zeta_+ \label{eq:diffepspp} \,,
\end{align}
\end{subequations}
where
\begin{align}
  D_\mu \zeta_+ = \partial_\mu \zeta_+ + \frac14 \omega_\mu{}^{ab} \gamma_{ab} \zeta_+ \,.
\end{align}
 Note that, in contrast to the previous case, the spin-connection field $\omega_\mu{}^a$ now also enters the equations.
We can again assume that any non-trivial solution for $\zeta_+$ of these equations is nowhere vanishing, via an argument analogous to the one given in section \ref{ssec:zetapluszero}. The conditions under which eqs. \eqref{eq:algepspp1}--\eqref{eq:diffepspp} admit non-trivial globally well-defined solutions can then be phrased as follows:
\begin{theorem}\label{thm2}
  The equations \eqref{eq:algepspp1}--\eqref{eq:diffepspp} have one non-trivial globally well-defined solution for $\zeta_+$ if and only if $\tau_{0\mu}$ is an exact one-form and there exists a globally well-defined unit vector $X^+_a$ such that the following conditions hold:{\footnote{Once again $v^{b}=e^{\mu b}v_{\mu}$ and $v_0=\tau^{\mu}v_{\mu}$, which appear on the right-hand side of conditions \eqref{eq:soltau0app} and \eqref{eq:solboostconn} respectively, are fully determined by condition \eqref{eq:solv1pp}. In view of \eqref{eq:solboostconn}, this means in particular that the connections for rotations and boosts are not independent for this class of solutions.}}
  \begin{subequations}
  \begin{align}
    \epsilon^{ab} \tau_{ab} &= -4 v \label{eq:vistauabpp} \,, \\
    \tau_{0a} &= \frac23 \left(\epsilon_{ab} v^b + \mathrm{Re}(u) X^+_a + \mathrm{Im}(u) Y^+_a \right) \label{eq:soltau0app} \,, \\
    \omega_\mu{}^a &= -\frac13 \epsilon^{ab} e_{\mu b} v_0 \label{eq:solboostconn} \,, \\
    v_\mu &= -3 \tau_\mu Y^{+a} D_0 X^+_a + \frac12 e_\mu{}^a \left(-3 Y^{+b} D_a X^+_b - \mathrm{Re}(u) Y^+_a + \mathrm{Im}(u) X^+_a \right) \label{eq:solv1pp} \,,
  \end{align}
  \end{subequations}
  where $Y^+_a = \epsilon_{ab} X^{+ b}$ and $D_\mu X^+_a = \partial_\mu X^+_a + \omega_{\mu a}{}^b X^+_b$. There are two independent globally well-defined solutions for $\zeta_+$ if and only if $\tau_{0\mu}$ is exact and there exists a globally well-defined unit vector $X^+_a$ such that the conditions \eqref{eq:vistauabpp}--\eqref{eq:solv1pp} hold with $u=0$.
\end{theorem}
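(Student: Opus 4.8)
\noindent The plan is to run the argument of the proof of Theorem~\ref{thm1} almost verbatim, the only structurally new features being the extra algebraic equation \eqref{eq:algepspp3} (which now also involves the boost connection $\omega_\mu{}^a$) and a stronger condition on $\tau_{0\mu}$. For the ``only if'' direction I would assume that a nowhere-vanishing, globally well-defined solution $\zeta_+^{(1)}$ of \eqref{eq:algepspp1}--\eqref{eq:diffepspp} exists. Evaluating \eqref{eq:algepspp1} on it and using non-vanishing gives \eqref{eq:vistauabpp} immediately. Then, in exact analogy with \eqref{eq:defNXY}--\eqref{eq:propXY}, I would form the bilinears $N^+=\rmi\,\bar{\zeta}_+^{(1)}\gamma_0\zeta_+^{(1)}$, $X^+_a$, $Y^+_a$ and record, via Fierz identities, that $X^+_a$ and $Y^+_a$ are unit vectors with $Y^+_a=\epsilon_{ab}X^{+b}$, $X^{+a}\gamma_a\zeta_+^{(1)}=\zeta_+^{(1)}$ and $Y^{+a}\gamma_a\zeta_+^{(1)}=\gamma_0\zeta_+^{(1)}$. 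Substituting \eqref{eq:vistauabpp} into \eqref{eq:algepspp2} and contracting with $\tau^\mu$, the same manipulation that produced \eqref{eq:Agammaeps} recasts the equation as $A^a\gamma_a\zeta_+^{(1)}=0$; since $\det(A^a\gamma_a)^2=(A^aA_a)^2$, non-triviality forces $A_a=0$, which is \eqref{eq:soltau0app}. The genuinely new step is \eqref{eq:algepspp3}: decomposing $\omega_\mu{}^a$ in the orthonormal frame $(X^+,Y^+)$ and projecting the equation onto the spinor basis $(\zeta_+^{(1)},\gamma_0\zeta_+^{(1)})$ with the help of the above $\gamma$-identities yields \eqref{eq:solboostconn} directly.

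Next I would feed \eqref{eq:vistauabpp}, \eqref{eq:soltau0app} and \eqref{eq:solboostconn} into the differential equation \eqref{eq:diffepspp}. Reorganising the right-hand side with the frame identities, it should collapse to $D_\mu\zeta_+^{(1)}=(a_\mu\,\mathds{1}_2+C^+_\mu\gamma_0)\zeta_+^{(1)}$ for some one-forms $a_\mu$, $C^+_\mu$, after which computing $\partial_\mu N^+$ and $D_\mu X^+_a$ exactly as in the step from \eqref{eq:diffepsp0} to \eqref{eq:bigCmin} gives $D_\mu X^+_a=2C^+_\mu Y^+_a$, hence $C^+_\mu=\tfrac12 Y^{+a}D_\mu X^+_a$, i.e.\ \eqref{eq:solv1pp}. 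The essential deviation from Theorem~\ref{thm1} is that here $a_\mu$ does \emph{not} vanish: a careful bookkeeping should show $a_\mu\propto\tau_{0\mu}$, so that once the rotation-connection pieces are removed the equation reads $\partial_\mu\zeta_+^{(1)}=(a_\mu+\tfrac12 Y^{+a}\partial_\mu X^+_a\,\gamma_0)\zeta_+^{(1)}$. Closedness of $\tau_{0\mu}$ is then needed for a local primitive $g$ ($\partial_\mu g=-a_\mu$) to exist and hence for integrability, and demanding that the resulting $\zeta_+^{(1)}=\exp(g)\,\exp(\tfrac12\arctan(X^+_1/X^+_2)\,\gamma_0)\,\zeta^+_0$ be globally single-valued forces $\tau_{0\mu}$ to be exact. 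For the two-solution count I would argue, as after \eqref{eq:vistauab}, that a second independent solution may be taken to be $\gamma_0\zeta_+^{(1)}$; evaluating \eqref{eq:algepspp2} on it reverses the sign of the $u$-dependent terms in \eqref{eq:soltau0app}, which is compatible with \eqref{eq:soltau0app} only when $u=0$, and independence of \eqref{eq:solv1pp} of the choice of Killing spinor follows exactly as in the paragraph after Theorem~\ref{thm1}.

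For the ``if'' direction I would check the four equations in turn assuming \eqref{eq:vistauabpp}--\eqref{eq:solv1pp} and $\tau_{0\mu}$ exact: \eqref{eq:algepspp1} holds identically by \eqref{eq:vistauabpp} and \eqref{eq:algepspp3} by \eqref{eq:solboostconn}; \eqref{eq:algepspp2}, after using \eqref{eq:vistauabpp} and \eqref{eq:soltau0app} and contracting with $\tau^\mu$, reduces to $(\mathrm{Im}(u)+\mathrm{Re}(u)\gamma_0)(\mathds{1}_2-X^{+a}\gamma_a)\zeta_+=0$, which is vacuous for $u=0$ and, the first factor being invertible, equivalent to $X^{+a}\gamma_a\zeta_+=\zeta_+$ for $u\neq0$; and \eqref{eq:diffepspp}, with all conditions substituted (and this eigenvector relation used when $u\neq0$), becomes $D_\mu\zeta_+=(a_\mu+\tfrac12 Y^{+a}D_\mu X^+_a\,\gamma_0)\zeta_+$ with $a_\mu\propto\tau_{0\mu}$, whose rotation-connection terms cancel and which, using a global primitive of the exact form $\tau_{0\mu}$, integrates to a globally well-defined $\zeta_+$; the constant spinor is free for $u=0$ (two solutions) and constrained to a $\gamma_2$-eigenspace for $u\neq0$ (one solution), completing the argument.

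The step I expect to be the main obstacle is the reduction of \eqref{eq:diffepspp}: one must establish cleanly that its $\mathds{1}_2$-part fails to cancel here, in contrast to Theorem~\ref{thm1}, and is proportional to $\tau_{0\mu}$, since this is precisely what turns exactness of $\tau_{0\mu}$ into a genuine necessary condition. A closely related subtlety is that \eqref{eq:solboostconn} now fixes $\omega_\mu{}^a$, so one must verify, using the Newton--Cartan expressions \eqref{eq:defNCconnstaumunu} for \emph{both} $\omega_\mu{}^a$ and $\omega_\mu{}^{ab}$, that \eqref{eq:solboostconn} imposes no constraints beyond \eqref{eq:vistauabpp}--\eqref{eq:solv1pp} (this is the content of the footnote's remark that the rotation and boost connections cease to be independent). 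Throughout, the boost covariance of \eqref{eq:algepspp1}--\eqref{eq:diffepspp} checked after \eqref{eq:boostepsilons} is a convenient consistency test on every reduced equation.
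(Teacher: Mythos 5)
Your proposal is correct and follows essentially the same route as the paper's proof: it mirrors Theorem~\ref{thm1}, extracts \eqref{eq:solboostconn} from the extra algebraic equation \eqref{eq:algepspp3} by the same determinant/frame argument, reduces \eqref{eq:diffepspp} to $D_\mu\zeta_+^{(1)} = -\tfrac12\tau_{0\mu}\,\zeta_+^{(1)} + \tfrac12 Y^{+a}D_\mu X^+_a\,\gamma_0\,\zeta_+^{(1)}$, and integrates, with the same treatment of the second solution $\gamma_0\zeta_+^{(1)}$ forcing $u=0$. The only cosmetic differences are that the paper gets exactness of $\tau_{0\mu}$ in the ``only if'' direction directly from the bilinear identity $\partial_\mu\log(N^+) = -\tau_{0\mu}$ with $N^+=-\rmi\,\bar\zeta_+^{(1)}\gamma_0\zeta_+^{(1)}$ globally defined and strictly positive (rather than from single-valuedness of the integrated solution), and your sign conventions for $N^+$ and for the invertible prefactor in the reduced form of \eqref{eq:algepspp2} differ from the paper's without affecting the argument; the worry about \eqref{eq:solboostconn} needing a separate consistency check is unnecessary, since it is simply one of the listed conditions on the background.
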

\begin{proof}
The proof of this statement proceeds in an entirely similar fashion to the analogous theorem of section \ref{ssec:zetapluszero}. Let us thus first assume the existence of one non-trivial, globally well-defined solution $\zeta_+^{(1)}$ of eqs. \eqref{eq:algepspp1}--\eqref{eq:diffepspp} and show that this implies the conditions \eqref{eq:vistauabpp}--\eqref{eq:solv1pp}, as well as the exactness of $\tau_{0\mu}$. Via similar reasoning as in section \ref{ssec:zetapluszero}, it can be easily seen that eqs. \eqref{eq:vistauabpp} and \eqref{eq:solboostconn} follow when eqs. \eqref{eq:algepspp1} and \eqref{eq:algepspp3} are satisfied for a non-trivial $\zeta_+^{(1)}$. The existence of a nowhere vanishing and globally well-defined $\zeta_+^{(1)}$ allows us to define 
\begin{align} \label{eq:defNXYpp}
  N^+ = - \rmi \bar{\zeta}_+^{(1)} \gamma_0 \zeta_+^{(1)} \,, \qquad X^+_a = - \frac{1}{N^+} \rmi \bar{\zeta}_+^{(1)} \gamma_{0a} \zeta_+^{(1)} \,, \qquad Y^+_a = - \frac{1}{N^+} \rmi \bar{\zeta}_+^{(1)} \gamma_a \zeta_+^{(1)} \,.
\end{align}
As in section \ref{ssec:zetapluszero}, $N^+$ is nowhere vanishing because $\zeta_+^{(1)}$ is and the vectors $X^+_a$ and $Y^+_a$ are globally well-defined. By virtue of their definition and Fierz identities, they obey
\begin{align} \label{eq:propsXYpp}
  & X^+_a = - \epsilon_{ab} Y^+_b \,, \qquad \quad X^{+a} X_a^+ = 1 = Y^{+a} Y^+_a \,, \nonumber \\
  & X^{+a} \gamma_a \zeta_+^{(1)} = \zeta_+^{(1)} \,, \qquad \qquad Y^{+a} \gamma_a \zeta_+^{(1)} = \gamma_0 \zeta_+^{(1)} \,.
\end{align}
With the help of $X^+_a$ and $Y^+_a$, we can then rewrite \eqref{eq:algepspp2} as
\begin{align}
  \left(-\frac32 \epsilon^{ab} \tau_{0b} - v^a + \mathrm{Re}(u) Y^{+a} - \mathrm{Im}(u) X^{+a} \right) \gamma_a \zeta_+^{(1)} = 0 \,,
\end{align}
from which \eqref{eq:soltau0app} follows. Using \eqref{eq:soltau0app} as well as \eqref{eq:propsXYpp} in the differential condition \eqref{eq:diffepspp} on $\zeta_+^{(1)}$, we then find
\begin{align}
  & D_\mu \zeta_+^{(1)} = -\frac12 \tau_{0\mu} \zeta_+^{(1)} + C_\mu^+ \gamma_0 \zeta_+^{(1)} \,,  \\
  & \mathrm{where} \qquad C_\mu^+ = -\frac16\left(\tau_\mu v_0 + 2 e_\mu{}^a v_a + \mathrm{Re}(u) e_\mu{}^a Y^+_a - \mathrm{Im}(u) e_\mu{}^a X^+_a \right) \,.
\end{align}
From this equation, one derives that
\begin{align} \label{eq:covderNX}
  \partial_\mu \left(\log(N^+)\right) = - \tau_{0\mu} \,, \qquad \qquad D_\mu X^+_a = 2 C_\mu^+ Y^+_a \,.
\end{align}
From the second equation, one finds
\begin{align}
  C_\mu^+ = \frac12 Y^{+a} D_\mu X^+_a \,,
\end{align}
which can be rewritten as \eqref{eq:solv1pp}. Note that $\log(N^+)$ is well-defined, since $N^+$ is a well-defined function that is strictly positive. The first equation of \eqref{eq:covderNX} then says that $\tau_{0\mu}$ is an exact form.

In case there is a second solution $\zeta_+^{(2)}$ of eqs. \eqref{eq:algepspp1}--\eqref{eq:diffepspp}, we can follow a similar reasoning as in Theorem 1 to show that the conditions \eqref{eq:vistauabpp}--\eqref{eq:solv1pp} have to be satisfied with $u=0$. Indeed, as in Theorem 1, we can choose
\begin{equation}
  \zeta_+^{(2)} = \gamma_0 \zeta_+^{(1)} \,.
\end{equation}
Checking that this is a solution of eqs. \eqref{eq:algepspp1} and \eqref{eq:algepspp3} again leads to the conditions \eqref{eq:vistauabpp} and \eqref{eq:solboostconn}. One also finds that requiring that $\zeta_+^{(2)}$ is a solution of \eqref{eq:algepspp2} leads to
\begin{equation}
\tau_{0a} = \frac23 \left(\epsilon_{ab} v^b - \mathrm{Re}(u) X^+_a - \mathrm{Im}(u) Y^+_a \right) \,.
\end{equation}
Since this should hold simultaneously with \eqref{eq:soltau0app}, one finds that $u=0$ and that \eqref{eq:soltau0app} holds with $u=0$. One can then again show that the differential condition \eqref{eq:diffepspp}, with $u=0$ and  evaluated for $\zeta_+^{(1)}$ (or equivalently for $\zeta_+^{(2)}$), reduces to
\begin{align}
  & D_\mu \zeta_+^{(1)} = -\frac12 \tau_{0\mu} \zeta_+^{(1)} + c_\mu^+ \gamma_0 \zeta_+^{(1)} \,,  \\
  & \mathrm{where} \qquad c_\mu^+ = -\frac16\left(\tau_\mu v_0 + 2 e_\mu{}^a v_a  \right) \,,
\end{align}
from which exactness of $\tau_{0\mu}$ and eq. \eqref{eq:solv1pp} with $u=0$ can be derived as above. In this way, we see that the existence of a globally well-defined solution $\zeta_+^{(1)}$ of eqs. \eqref{eq:algepspp1}--\eqref{eq:diffepspp} implies exactness of $\tau_{0\mu}$ and the existence of a globally well-defined vector $X^+_a$ such that the conditions \eqref{eq:vistauabpp}--\eqref{eq:solv1pp} hold, where $u=0$ in case there are two independent solutions.

Let us now assume that $\tau_{0\mu}$ is exact and that one can find a globally well-defined vector $X^+_a$ such that eqs. \eqref{eq:vistauabpp}--\eqref{eq:solv1pp} are valid. One can then easily see that the Killing spinor equations \eqref{eq:algepspp1} and \eqref{eq:algepspp3} are identically satisfied for any $\zeta_+$, by virtue of \eqref{eq:vistauabpp} and \eqref{eq:solboostconn}. Plugging \eqref{eq:soltau0app} in \eqref{eq:algepspp2}, one finds that \eqref{eq:algepspp2} reduces to
\begin{align}
  \left(\mathrm{Re}(u) \gamma_0 - \mathrm{Im}(u) \right) \left(\mathds{1}_2 - X^{+a} \gamma_a\right) \zeta_+ = 0 \,.
\end{align}
When $u=0$, this equation is again identically satisfied for any $\zeta_+$. When $u\neq 0$, we can use the fact that then $\mathrm{Re}(u) \gamma_0 - \mathrm{Im}(u)$ is invertible to infer that $\zeta_+$ is an eigenvector of $X^{+a} \gamma_a$ with eigenvalue +1. Such an eigenvector can always be found, since $X^{+a} \gamma_a$ is diagonalizable with one eigenvalue +1 and the other eigenvalue -1. Finally, in this case, we can use \eqref{eq:soltau0app}, \eqref{eq:solv1pp} and the fact that $\zeta_+$ has to be an eigenvector of $X^{+a} \gamma_a$ with eigenvalue 1, in \eqref{eq:diffepspp} to find that \eqref{eq:diffepspp} reduces to
\begin{align}
  D_\mu \zeta_+ = -\frac12 \tau_{0\mu} \zeta_+ + \frac12 Y^{+a} D_\mu X^+_a \gamma_0 \zeta_+ \,.
\end{align}
Similar manipulations give the same equation when $u=0$.
The spin connection terms in the covariant derivatives of this equation again cancel out, leaving one with
\begin{align} \label{eq:finaldiffeqpp}
  \partial_\mu \zeta_+ = -\frac12 \tau_{0\mu} \zeta_+ + \frac12 Y^{+a} \partial_\mu X^+_a \gamma_0 \zeta_+ = -\frac12 \tau_{0\mu} \zeta_+ +  \frac12\left(X^+_2 \partial_\mu X^+_1 - X^+_1 \partial_\mu X^+_2\right)\gamma_0 \zeta_+ \,.
\end{align}
Exactness of $\tau_{0\mu}$ can now be invoked to write
\begin{align}
  -\frac12 \tau_{0\mu} = \partial_\mu \Phi \,,
\end{align}
where $\Phi$ is a well-defined function. The equation \eqref{eq:finaldiffeqpp} can then be integrated to
\begin{align}
  \zeta_+ = \rme^\Phi \exp\left(\frac12 \arctan{\left(\frac{X^+_1}{X^+_2}\right)}\gamma_0 \right) \zeta^+_0 \,,
\end{align}
where $\zeta^+_0$ is a constant spinor. When $u=0$, this constant spinor is unconstrained, leading to two independent solutions. When $u\neq 0$, $\zeta_0^+$ obeys
\begin{align}
 \gamma_2 \zeta^+_0 = \mathrm{sign}(X^+_2) \zeta^+_0 \,,
\end{align}
to ensure that $\zeta_+$ is an eigenvector of $X^{+a} \gamma_a$ of eigenvalue 1. This shows that there is only one solution when $u\neq 0$. In this way, we have shown that the conditions \eqref{eq:vistauabpp}--\eqref{eq:solv1pp} ensure that a solution of \eqref{eq:algepspp1}--\eqref{eq:diffepspp}  can be found. This solution is globally well-defined by virtue of the well-definedness of $X^+_a$ and $\Phi$, thus proving the theorem.
\end{proof}
As in theorem 1, one should show that the expression for $v_\mu$ is independent of the choice of Killing spinor when $u=0$. This can be done analogously to the discussion at the end of section \ref{ssec:zetapluszero}.

Let us finally comment on the rigid superalgebra that is obeyed by the matter multiplet \eqref{eq:3dtrans}, when placed in a background, in which $\tau_{0\mu}$ is exact and the relations \eqref{eq:vistauabpp}--\eqref{eq:solv1pp} hold. Let us denote the supercharges associated to a solution $(\zeta^{(i)}_+,0)$ ($i=1,2$) of the Killing spinor equations by $Q(\zeta^{(i)}_+)$. Considering first the case, in which there is only one Killing spinor $(\zeta_+^{(1)},0)$, we find the following anti-commutation relation
\begin{align}\label{eq:Q+Q+}
    \lc Q(\zeta^{(1)}_+),Q(\zeta^{(1)}_+)\rc  = -\rmi\,\frac{\sqrt2}{2}\,\mathcal{L}\ls N^+\tau^\mu\rs\,,
\end{align}
where $N^+ = -\rmi\bar\zeta_+^{(1)}\gamma_0\zeta_+^{(1)}$. The operator $\mathcal{L} \ls N^+ \tau^\mu \rs$ acts as an ordinary Lie derivative along $N^+ \tau^\mu$ on scalars and in the following way on fermions
\begin{align}
    \mathcal{L}\ls N^+\tau^\mu\rs \psi_\pm = N^+\tau^\mu\lr \bar\nabla_\mu\psi_\pm - \frac14 \tau_\mu\lr Y_c^+D_0X_c^+\epsilon^{ab}\rr\gamma_{ab}\psi_\pm\rr\,.
\end{align}
Note that the second term on the right-hand-side takes the form of a local rotation. 
Let us now assume that there exists a second Killing spinor $(\zeta_+^{(2)},0)$, with $\zeta_+^{(2)} = \gamma_0\zeta_+^{(1)}$. The anti-commutator $\{ Q(\zeta_+^{(2)}), Q(\zeta_+^{(2)})\}$ is then formally the same as in eq. \eqref{eq:Q+Q+}, whereas the mixed anti-commutator $\{ Q(\zeta_+^{(1)}), Q(\zeta_+^{(2)})\}$ is zero. Summarizing:
\begin{align}\label{eq:Q+Q+full}
    \lc Q(\zeta^{(i)}_+),Q(\zeta^{(j)}_+)\rc  = -\rmi\,\delta^{ij}\,\frac{\sqrt2}{2}\,\mathcal{L}\ls N^+\tau^\mu\rs\qquad\forall i,j=1,2\,.
\end{align}
Note that the Lie derivatives of the geometric background fields $\tau_\mu$, $e_\mu{}^a$ and $m_\mu$ along $N^+ \tau^\mu$, are zero up to local spatial rotations, Galilean boosts and central charge transformations (with parameters that depend on $N^+ \tau^\mu$), as can be checked by using equation \eqref{eq:covderNX}. In this sense, the quantity $N^+ \tau^\mu$ can be interpreted as a time-like Killing vector of the background Newton-Cartan geometry and the anti-commutation relation can be viewed as saying that the supercharges close into a time-like background isometry. This isometry furthermore commutes with the supercharges, i.e., $\ls Q(\zeta^{(i)}_+),\mathcal{L}\ls N^+\tau^\mu\rs\rs=0$.

\subsection[Cases with Killing spinors of both types $(0,\zeta_-)$ and $(\zeta_+,0)$]{Cases with Killing spinors of both types (\boldmath${0,\zeta_-}$) and (\boldmath${\zeta_+,0}$)}

Sections \ref{ssec:zetapluszero} and \ref{ssec:zetaplusneq0} dealt with the cases where there are one or two Killing spinors, that are either both of the type $(0,\zeta_-)$ (in section \ref{ssec:zetapluszero}) or of the type $(\zeta_+,0)$ (in section \ref{ssec:zetaplusneq0}). One can also consider cases where there are 2 or more Killing spinors of both types present. This can be done by combining the content of Theorems 1 and 2 of the previous two subsections. As an example, let us consider the constraints on the geometry and auxiliary fields in case there are four Killing spinors, i.e. in case there are two Killing spinors of the type $(0,\zeta_-)$ and two of the type $(\zeta_+,0)$. Theorems 1 and 2 with $u=0$ should then hold simultaneously. One then easily sees that
\begin{align}
  \epsilon^{ab} \tau_{ab} = \tau_{0a} = v = v_a = 0 \,.
\end{align}
There also exist well-defined unit vector fields $X^\pm_a$ (along with $Y^\pm_a = \epsilon_{ab} X^{\pm b}$) such that $v_\mu$ can be written in two different ways
\begin{align} \label{eq:vmutwoways}
   v_\mu &= \tau_\mu Y^{- a} D_0 X^-_a + \frac32 e_\mu{}^a Y^{- b} D_a X^-_b  \qquad \qquad \mathrm{and} \nonumber \\  v_\mu &= -3 \tau_\mu Y^{+ a} D_0 X^+_a - \frac32 e_\mu{}^a Y^{+ b} D_a X^+_b \,.
\end{align}
Extracting the $v_a$ components from these equations and requiring that they are zero, then implies that the spatial components $e^\mu{}_a \omega_\mu{}^{bc}$ of the rotation connection can be written in terms of $X^-_a$ as
\begin{align}
  e^\mu{}_a \omega_\mu{}^{bc} \epsilon_{bc} = -2 e^\mu{}_a \partial_\mu \left(\arctan\left(\frac{X^-_1}{X^-_2}\right)\right) \,,
\end{align}
and that the vector fields $X^\pm_a$ should obey the following constraint
\begin{align}
  e^\mu{}_a \partial_\mu \left(\arctan\left(\frac{X^-_1}{X^-_2}\right)\right) = e^\mu{}_a \partial_\mu \left(\arctan\left(\frac{X^+_1}{X^+_2}\right)\right) \,.
\end{align}
By looking at the time-like component $v_0$ of \eqref{eq:vmutwoways}, we see that the time-like component $\tau^\mu \omega_\mu{}^{ab}$ of the rotation connection and $v_0$ are given in terms of $X^\pm_a$ by
\begin{align}
  \tau^\mu \omega_\mu{}^{ab} \epsilon_{ab} &= - \frac12 \tau^\mu \partial_\mu \left(\arctan\left(\frac{X^-_1}{X^-_2}\right)\right) - \frac32 \tau^\mu \partial_\mu \left(\arctan\left(\frac{X^+_1}{X^+_2}\right)\right) \,, \nonumber \\
  v_0 &= \frac34 \tau^\mu \partial_\mu \left(\arctan\left(\frac{X^-_1}{X^-_2}\right) - \arctan\left(\frac{X^+_1}{X^+_2}\right) \right) \,.
\end{align}
We thus see that $\omega_\mu{}^{ab}$ is completely determined by $X^\pm_a$. The same is true for the boost connection $\omega_\mu{}^a$, since
\begin{align}
  \omega_\mu{}^a &= -\frac13 \epsilon^{ab} e_{\mu b} v_0 = - \frac14 \epsilon^{ab} e_{\mu b} \tau^\nu \partial_\nu \left(\arctan\left(\frac{X^-_1}{X^-_2}\right) - \arctan\left(\frac{X^+_1}{X^+_2}\right) \right) \,.
\end{align}

Let us now discuss the algebra that is realized when we consider Killing spinors of both types. We will again only consider the case in which there are four linearly independent Killing spinors of the form $(\zeta_+^{(i)},0)$ and $(0,\zeta_-^{(j)})$ (with $i,j=1,2$). The matter multiplet \eqref{eq:3dtrans} is then only non-minimally coupled through $v_0$. The supercharges denoted by $Q(\zeta^{(i)}_+)$ and $Q(\zeta^{(j)}_-)$ were discussed in sections \ref{ssec:zetapluszero} and \ref{ssec:zetaplusneq0} and as such, they satisfy relations \eqref{eq:Q-Q-full} (with $\tau_{0a}=0$) and \eqref{eq:Q+Q+full}. Moreover, one also finds that 
\begin{align}
    \lc Q(\zeta^{(1)}_+),Q(\zeta^{(1)}_-)\rc &= -\frac{\rmi}{2}\,\mathcal{L}\ls N^a e^{\mu}{}_a\rs\,,\notag\\
    \lc Q(\zeta^{(2)}_+),Q(\zeta^{(2)}_-)\rc &= +\frac{\rmi}{2}\,\mathcal{L}\ls N^a e^{\mu}{}_a\rs\,,\\
    \lc Q(\zeta^{(1)}_+),Q(\zeta^{(2)}_-)\rc &= -\frac{\rmi}{2}\,\mathcal{L}\ls \epsilon_{ab}\,N^a e^{\mu b}\rs\,,\notag
\end{align}
where $N^a = \rmi\bar\zeta^{(1)}_+\gamma_a\zeta^{(1)}_-$. The operators $\mathcal{L}\ls N^a e^{\mu}{}_a\rs$, resp. $\mathcal{L}\ls \epsilon_{ab}\,N^a e^{\mu b}\rs$ act as ordinary Lie derivative along $N^a e^{\mu}{}_a$, resp. $\epsilon_{ab}\,N^a e^{\mu b}$ on scalars. Their action on fermions includes an extra local boost term 
\begin{align}\label{eq:spatialLieLorentz}
    \mathcal{L}\ls N^a e^{\mu}{}_a\rs \psi_\pm = N^a e^{\mu}{}_a\bar\nabla_\mu\psi_\pm - \delta_G\lr \bar\nabla_0 N^a\rr \psi_\pm\,,
\end{align}
and analogously for $\mathcal{L}\ls \epsilon_{ab}\,N^a e^{\mu b}\rs$. Moreover, one can show that $N^a e^{\mu}{}_a$ and $\epsilon_{ab}N^a e^{\mu b}$ generate space-like isometries of the Newton-Cartan background geometry, in the sense discussed after eq. \eqref{eq:Q+Q+full}. Using that the Killing spinors are constant with respect to the derivative operation $\mathcal{L}$ of \eqref{eq:spatialLieLorentz}, one can furthermore show that the isometries commute with the supercharges
\begin{align}
    \ls Q(\zeta^{(i)}_\pm),\mathcal{L}[N^+\tau^\mu]\rs=0\,,\qquad\ls Q(\zeta^{(i)}_\pm),\mathcal{L}[N^a e^{\mu}{}_a]\rs=0\,, \qquad\ls Q(\zeta^{(i)}_\pm),\mathcal{L}[\epsilon_{ab}N^a e^{\mu b}]\rs=0\,.\notag
\end{align}

\subsection{Examples} \label{ssec:examples}

In this section, we will give two explicit classes of three-dimensional Newton-Cartan geometries that admit Killing spinors solving equations \eqref{eq:OMalg1}--\eqref{eq:ucKSE2}. The first class of examples is characterized by an integrable Newton-Cartan foliation structure (i.e.\,with $\tau_{[\mu} \partial_\nu \tau_{\rho]} = 0$), whereas the second class has a non-integrable foliation structure (i.e.\,with $\tau_{[\mu} \partial_\nu \tau_{\rho]} \neq 0$).

\paragraph{Integrable Foliation} In order to give our first class of examples, we split the coordinates $x^\mu$ as $x^\mu = \{x^0 = t, x^i\}$ ($i=1,2$) and choose an Ansatz that expresses the Newton-Cartan Vielbeine $\tau_\mu$, $e_\mu{}^a$ and the central charge gauge field $m_\mu$ in terms of three arbitrary functions $\kappa(t,x^i)$, $\lambda(t,x^i)$ and $\phi(t,x^i)$:
\begin{equation} \label{eq:confans}
    \tau_\mu \rmd x^\mu = \rme^\kappa\,dt\,,\qquad e_\mu{}^a \rmd x^\mu = \rme^\lambda \delta^a_i\,dx^i\,,\qquad\mathrm{and}\qquad m_\mu \rmd x^\mu = \phi\,\tau_\mu \rmd x^\mu\,.
  \end{equation}
For the projective inverse Vielbeine $\tau^\mu$, $e^\mu{}_a$, we then have
\begin{align}
    \partial_{0}= \, \tau^\mu \partial_\mu  = \rme^{-\kappa}\frac{\partial}{\partial t} \,,\qquad\qquad \partial_{a}=\, e^\mu{}_a \partial_\mu = \rme^{-\lambda} \delta_a^i \frac{\partial}{\partial x^i}\,.
\end{align}
This Ansatz is inspired by a class of four-dimensional Lorentzian backgrounds, discussed in \cite{Liu:2012bi}. Note that this Ansatz is such that the time-like Vielbein $\tau_\mu$ obeys the condition
\begin{align} \label{eq:hypsurface}
  \tau_{[\mu} \partial_\nu \tau_{\rho]} = 0 \,,
\end{align}
but is not closed ($\partial_{[\mu} \tau_{\nu]} \neq 0$). Specifically, the torsion $\tau_{\mu\nu} = 2 \partial_{[\mu}\tau_{\nu]}$ is found to be 
\be 
\tau_{a0}=\partial_a\kappa\,, \qquad \tau_{ab}=0\,.
\ee
Geometries that obey this condition have been encountered in the context of the AdS/CFT correspondence and are also commonly referred as `twistless torsional Newton-Cartan geometries' in the literature \cite{Christensen:2013lma,Christensen:2013rfa}. Geometrically, the condition \eqref{eq:hypsurface} says that the Newton-Cartan manifold is foliated in a one-dimensional time-direction and two-dimensional spatial slices. Our Ansatz then also specifies that the metric $h^{\mu\nu} = e^\mu{}_a e^{\nu a}$ on these spatial slices is conformally flat. The central charge gauge field $m_\mu$ in our Ansatz only has a non-zero time-like component $\tau^\mu m_\mu$, given by $\phi$. In the context of Newton-Cartan gravity (which has zero torsion $\tau_{\mu\nu} = 0$), $\phi$ would correspond to the Newton potential and we will refer to $\phi$ as the Newton potential here as well. Note that this is a slight abuse of terminology however, since the backgrounds we are looking for do not necessarily have to solve any equations of motion of an underlying non-relativistic gravitational theory.
  
We will first restrict ourselves to the case in which we are looking for backgrounds that have one or two Killing spinors of the form $(0,\zeta_-)$ and discuss the $(\zeta_+,0)$ case later. The background (auxiliary) fields then have to obey equations \eqref{eq:vistauab}--\eqref{eq:solv1}. Here, we will work out the ensuing relations between the background fields $\{v,\,v_\mu,\,u\}$ and the geometric data $\{\phi,\,\kappa,\,\lambda\}$ explicitly. Using \eqref{eq:defNCconnstaumunu}, it is straightforward to show that
\begin{align}\label{spinconnectionsexample1}
    \omega_\mu &\equiv \omega_\mu{}^{ab} \epsilon_{ab} =  2\,e_\mu{}^a\epsilon_{ab}\partial^b \lambda\,,\qquad \qquad \omega_\mu{}^a =  e_\mu{}^a \partial_0  \lambda -\tau_\mu\left(\partial^a\phi+ \phi\,\partial^a\kappa\right)\,.
\end{align}
The associated curvatures can be calculated from \eqref{eq:Jcurv} and \eqref{eq:Gcurv}. For this particular example, we will assume that $\partial_0\kappa=0=\partial_0\lambda$, so that the Newton-Cartan Vielbeine $\tau_\mu$ and $e_\mu{}^a$ are time-independent. With this assumption, we find the following expressions for the curvature of spatial rotations and boosts
\begin{align}
 \label{curvature ex1}   R_{\mu\nu}(J) &= -2\lr \partial_c\partial_c \lambda + \partial_c \lambda \partial_c \lambda \rr\,\epsilon_{ab}e_\mu{}^a e_\nu{}^b\,,\\
    R_{\mu\nu}(G^a) &= 2\lr \partial^b\Phi^a + \partial^b\kappa\,\Phi^a  - \partial^a \lambda\Phi^b -\Phi^{(a}\tau^{b)0}\rr\tau_{[\mu}e_{\nu] b} + 2\Phi^b\partial_b \lambda \,\tau_{[\mu}e_{\nu]}{}^a\,,
\end{align}
where $\Phi^a = \partial^a\phi + \phi\,\partial^a\kappa$. Note that $R_{\mu\nu}(J)$ does not depend on the Newton potential $\phi$. Moreover, the above expression for $R_{\mu\nu}(J)$ captures the curvature of the spatial slices. 

Demanding that there exist one or two solutions of the Killing spinor equations of the form $(\zeta_+=0, \,\zeta_-)$, entails requiring that there exists a well-defined unit vector field $X_a^-$, as outlined in section \ref{ssec:zetapluszero}. Here, we will choose this unit vector field to be constant and given by $X_1^-=X_2^-=1/\sqrt2$. Note that such a choice can be viewed as a gauge fixing for local spatial rotations. The relations \eqref{eq:vistauab}--\eqref{eq:solv1} then allow us to give explicit expressions for the background fields:
\begin{align}
    v&=0\,,\qquad v_0=0\,, \qquad v^a= -\frac12 \epsilon^{ab}\partial_b\kappa + \epsilon^{ab}\partial_b \lambda \,, \notag\\
    \mathrm{Re}(u)&= -X_a^-\partial^a\lr \lambda+\kappa\rr\,,\qquad 
    \mathrm{Im}(u) = Y_a^-\partial^a\lr \lambda+\kappa\rr\,,
\end{align}
and for the Killing spinor
\begin{align}
    \zeta_{-} = e^{\frac{\pi}{8}\gamma_0} \zeta_{0}^{-}\,,
\end{align}
where the constant spinor $\zeta_0^-$ is unconstrained if $u=0$ (corresponding to the case in which there are two Killing spinors) and obeys $\gamma_2\zeta_0^- = \zeta_0^-$ if $u\neq0$ (corresponding to the case in which there is only one Killing spinor). We note in passing that a straightforward calculation confirms that substitution of the above expressions for the background fields in the integrability condition \eqref{ICzeta-} indeed results in \eqref{curvature ex1}, as it should. 

As an illustrative example we consider a manifold whose spatial slices are isomorphic to the Poincar\'e disc with radius $\ell$:
\begin{align}
    \rme^{\lambda} = \frac{2}{1-x_ix^i/\ell^2}\,,\qquad\mathrm{where}\qquad x^ix^i< \ell^2\,, \quad i=1,2\,.\label{eq:PoincDisc}
\end{align}
When it comes to curvatures, we observe that $R_{\mu\nu}(J) = -2/\ell^2\,\epsilon_{ab}e_\mu{}^a e_\nu{}^b$ is completely determined by the Ansatz \eqref{eq:PoincDisc}. Note, however, that the Newton potential $\phi$ is unconstrained by supersymmetry---and thus also $R_{\mu\nu}(G^a)$.

We will focus on the case of two supercharges, which imposes that $u=0$ and thus $-\partial^a\kappa = \partial^a \lambda$. This leaves us with just one non-vanishing background field $v^a = -3/2\,\epsilon^{ab}\tau_{b0}$. Inserting the explicit expression
\begin{align}
     v^a =  \frac{3}{2}\lr \frac{x^2}{\ell^2}, -\frac{x^1}{\ell^2}\rr\label{eq:vaPoincDisc}
\end{align}
into the action \eqref{eq:3dact} and the transformation rules \eqref{eq:3dtrans} yields a supersymmetric theory that is coupled to the background \eqref{eq:PoincDisc}. Note that this leads to a number of non-minimal coupling terms that are suppressed by $1/\ell^2$, such that the action and the supersymmetry rules reduce to the flat space expressions in the limit $\ell\to\infty$.

To summarize, we have shown that a twistless torsional Newton-Cartan geometry of the form \eqref{eq:confans} (with $\tau_\mu$, $e_\mu{}^a$ time-independent) with Poincar\'e disc spatial slices (determined by \eqref{eq:PoincDisc}) and arbitrary Newton potential $\phi$ allows for two supercharges of the form $(0,\,\zeta_-)$.

Let us now turn to the $(\zeta_+,0)$ case and assume there are two supercharges as in the Poincar\'e disc example above. The relevant conditions now are those of Theorem \ref{thm2}, namely \eqref{eq:vistauabpp}--\eqref{eq:solv1pp}. Then the explicit expressions for the background fields, with unit vector being once more $X_1^+=X_2^+=1/\sqrt2$, are 
\bea 
v=0\,, \quad v_0=0\,,\quad  u=0\, \quad \text{and}\quad v^{a}=\frac 32 \epsilon^{ab}\tau_{b0}\,,
\eea 
while the Killing spinor takes the form 
\be \zeta_{+} = e^{\frac{\kappa}{2}+\frac{\pi}{8}\gamma_0} \zeta_{0}^{+}\,.
\ee
Note the sign difference in $v^{a}$ with respect to the previous example. In order to establish that this background indeed satisfies all conditions of Theorem \ref{thm2}, one should also examine the consistency of eqs. \eqref{eq:solboostconn} and \eqref{eq:solv1pp}, noting that the spin connection is given by \eqref{spinconnectionsexample1}. Consistency requires that $\Phi^{a}=0$, which can be solved by $\phi=e^{-\kappa}$. This in turn means that the central charge gauge field is a constant one-form for this solution, i.e. $m_{\mu}d x^{\m}= dt$. Note that the curvature of boosts vanishes in this case, $R_{\m\n}(G^{a})=0$. Thus, we have demonstrated that a twistless-torsional Newton-Cartan geometry with Poincar\'e disc spatial slices allows for two supercharges of the form $(\zeta_+,\,0)$ for the special potential $\phi=e^{-\kappa}$.

Finally, let us comment for completeness on the case of vanishing torsion as a special case of this class of examples. This is obtained for constant $\kappa$, which may be taken to be zero without loss of generality. With $\phi=0$, the Newton-Cartan fields are
\begin{equation} \label{eq:zerotorsion}
\tau_\mu \rmd x^\mu = dt\,,\qquad e_\mu{}^a \rmd x^\mu = \rme^\lambda \delta^a_i\,dx^i\,,\qquad\mathrm{and}\qquad m_\mu \rmd x^\mu = 0\,,
\end{equation}
for which indeed $\tau_{\mu\nu}=0$, corresponding to $\mathbb{R}\times {\cal M}_2$ with ${\cal M}_2$ an arbitrary 2-manifold. For time-independent $\lambda$, the boost connection vanishes, namely $\omega_{\mu}{}^{a}=0$, and the nonvanishing components of $\omega_{\mu}$ are $\omega_{a}=2\epsilon_{ab}\partial^b\lambda$. Curvature thus resides in the spatial slices. The background fields of the solution are then given as 
\begin{align}
v&=0\,,\qquad v_0=0\,,\qquad v_a= \frac 12 \omega_a\,,\notag\\
\mathrm{Re}(u)&= -\frac 12 Y^{-}_{a}\omega^a\,,\qquad
\mathrm{Im}(u) =-\frac 12 X^{-}_a\omega^a\,.
\end{align}
Such solutions descend from four-dimensional supersymmetric backgrounds of the form $\mathbb{R}^{1,1}\times {\cal M}_2$ upon null reduction. 

\paragraph{Non-Integrable Foliation} As a complementary class of examples, we consider torsional Newton-Cartan geometries that have
\begin{align}
    \tau_{[\mu}\partial_\nu\tau_{\rho]}\neq 0
\end{align}
which corresponds to a foliation of spacetime that is non-integrable. Note that such spacetimes are typically excluded since non-relativistic causality is violated, see e.g. \cite{Bergshoeff:2017dqq}. Here, however, we do not consider geometry as a physical spacetime, but rather as a rigid background. This kind of geometrical structure is well-studied in the mathematical literature and known as a contact structure, see e.g. \cite{mcduff_introduction_2017}.

We will now solve the Killing spinor equations explicitly by splitting the coordinates $x^\mu$ as $x^\mu=\{x^0=t,x^i\}$ $(i=1,2)$ and choosing the Ansatz
\begin{align} \label{eq:secondans}
    \tau_\mu\,dx^\mu = dt + \alpha_i\,dx^i\,,\qquad e_\mu{}^a dx^\mu = e_i{}^a dx^i\,,\qquad\mathrm{and}\qquad m_\mu\,dx^\mu = -\frac12\,\alpha_i\,dx^i\,,
\end{align}
which corresponds to the following expressions for the projective inverse Vielbeine $\tau^\mu,\,e^\mu{}_a$
\begin{align}
    \tau^\mu\partial_\mu = \frac{\partial}{\partial t}\,,\qquad e^\mu{}_a\partial_\mu = e^i{}_a\lr \frac{\partial}{\partial x^i} - \alpha_i\frac{\partial}{\partial t}\rr\,,
\end{align}
where $e^i{}_a$ is the matrix inverse of $e_i{}^a$. For simplicity, we assume that $\partial_t\alpha_i=0$ and $\partial_t e_i{}^a=0$, which leads to $\tau_{a0}=0$ but $\tau_{ab}=2e^{i}{}_ae^{j}{}_b\partial_{[i}\alpha_{j]}\neq 0$. In the context of contact geometry, this identifies $\tau^\mu\partial_\mu$ as the Reeb-vector field \cite{mcduff_introduction_2017}. Furthermore, the Newton-Cartan spin connections may be once again computed using eqs. \eqref{eq:defNCconnstaumunu} and we find 
\bea 
\omega_{\m}{^{a}}&=&\frac 14 e_{\m b}\tau^{ab}\,,\\ 
\omega_{\m}{}^{ab}&=&2e^{\n[a}\partial_{[\m}e_{\n]}{}^{b]}-e_{\m}{}^{c}e^{\n a}e^{\rho b}\partial_{[\n}e_{\rho]c}+\frac 14 \tau_{\m}\tau^{ab}\,.
\eea
We observe that the two spin connections are related as $\omega_{\m}{}^{a}=e_{\m b}\tau^{\n}\omega_{\n}{}^{ab}$.

Backgrounds with non-integrable foliation $\tau_{ab}\neq 0$ cannot preserve more than two supercharges. 
 This can be seen by noting that having  Killing spinors of both the
form $(\zeta_+,0)$ and $(0,\zeta_-)$ implies that both eqs.\,\eqref{eq:vistauab} and \eqref{eq:vistauabpp} hold, implying that $\tau_{ab}=0$. Hence we conclude that backgrounds with $\tau_{ab}\neq 0$ can only preserve supercharges of the form $(0,\zeta_-)$ or $(\zeta_+,0)$---and thus at most two supercharges. In our present analysis, we assume two Killing spinors of the first type. 
	
 An explicit Newton-Cartan geometry for this example is  
\begin{align}\label{eq:RxSO(3)}
\alpha_i\,dx^i= - \frac{\ell}{2}\,\cos\eta_1\,d\eta_2\,,\qquad e_i{}^1 dx^i = \frac{\ell}{2}\,d\eta_1\,,\qquad e_i{}^2dx^i = \frac{\ell}{2}\,\sin\eta_1\,d\eta_2 \,,
\end{align}
defined on a three-manifold with coordinates
\begin{align}
x^\mu=(t,\eta_1,\eta_2) \qquad \qquad  t \in\mathds{R}, \ \ \eta_1\in [0,\pi), \ \ \eta_2\in [0,2\pi) \,.
\end{align}
In this case, the torsion is found to be $\tau_{ab}= 2 \epsilon_{ab}/\ell$. Moreover, the components of the Newton-Cartan spin connections can be written explicitly as 
\be 
\omega_\mu = \frac 1{\ell}\tau_{\mu}-\frac 4{\ell}\cot\eta_1\,e_\mu{}^2\,\quad \text{and}\quad   \omega_{\m}{}^{a}=\frac 1{2\ell}e_{\m b}\epsilon^{ab}\,.
\ee 
From the general discussion in section \ref{ssec:zetapluszero}, we initially conclude that
\begin{align}
    u=0=v^a \qquad \text{and} \qquad  v =  \frac3{\ell} \,.
\end{align}
Furthermore, we observe that a constant unit vector $X^{-a}$ is not a consistent choice in the present case, since it would render condition \eqref{eq:solv1} inconsistent.
Instead, we make the time-dependent choice
\begin{align}
  X^{-}_1=\sin \left(\sfrac{2t}{\ell}\right) \,, \qquad \qquad  X^{-}_2=\cos\left(\sfrac {2t}{\ell}\right) \,.
\end{align}
Then, condition \eqref{eq:solv1} implies that 
\be 
v_0=\frac 5{2\ell}\,.
\ee 
It is then straightforward to check that all conditions of Theorem \ref{thm1} are satisfied. Furthermore, the only nonvanishing components of the corresponding curvatures are found to be  
\begin{align}
R_{\mu\nu}(J) &= \frac{14}{\ell^2} \epsilon^{ab}e_\mu{}^a e_\nu{}^b\,, \qquad \qquad
 R_{\mu\nu}(G^a) = -\frac{1}{2\ell^2}\tau_{[\mu}e_{\nu]}{}^a\,.
\end{align}
This geometry has an $\mathds{R}\times SO(3)$ isometry, where the non-compact part corresponds to translations in time, as explained in \cite{Grosvenor:2017dfs}. In the same reference it is also shown that the geometry \eqref{eq:RxSO(3)} can be obtained from a null reduction of $\mathds{R}\times\mathds{S}^3$. The lighlike isometry is a linear combination of time-translations and translations along the Hopf fiber. The relativistic background $\mathds{R}\times\mathds{S}^3$ is known \cite{Festuccia:2011ws} to preserve four supercharges. Hence it is not surprising that the null reduction \eqref{eq:RxSO(3)} provides a supersymmetric background too. However, we observe that the three-dimensional geometry allows for just two supercharges of the form $(0,\zeta_-)$ with
\begin{align}
    \zeta_- (t) =e^{t\gamma_0/\ell} \zeta_0^-\,, \label{eq:explKSzetap}
\end{align}
where $\zeta_0^-$ is a constant, but otherwise unconstrained spinor.

Note that two supercharges have been lost by reducing to three dimensions. This can be traced back to the fact that not all Killing spinors in four dimensions can be made independent of the chosen lightlike isometry.\footnote{We thank Guido Festuccia for bringing that point to our attention and for remarking that an analogous reduction that does not break supersymmetry might be possible within the framework of New Minimal supergravity, see \cite{Festuccia:2011ws}.}

To summarize, we find a supersymmetric theory with two supercharges of the form $(0,\zeta_-)$ that is non-minimally coupled to the background \eqref{eq:secondans} (with $\tau_\mu$ and $e_\mu{}^a$ time-independent) with \eqref{eq:RxSO(3)}. All the terms that go beyond the flat space expression are suppressed as $\ell\to \infty$.

\section{Conclusions}

Recent years have witnessed considerable progress in understanding non-perturbative aspects of QFT by constructing and studying susy QFTs on curved backgrounds. Observables in such theories can often be calculated exactly via localization techniques \cite{Pestun:2016zxk}. This has led to new insights in the non-perturbative structure of susy QFTs and allowed e.g. precision tests of AdS/CFT and holography. Most studies in the literature are concerned with relativistic susy QFTs, i.e. susy QFTs on backgrounds in which local inertial frames are connected via relativistic space-time symmetries. A natural question to ask is whether similar techniques can be used to study the non-perturbative behaviour of non-relativistic susy QFTs that live on space-times, whose local inertial frames are connected by non-relativistic space-time symmetries, such as Newton-Cartan manifolds.

Motivated by this question, we have constructed examples of non-relativistic rigidly supersymmetric field theories on curved three-dimensional Newton-Cartan manifolds. We have in particular obtained a Lagrangian and supersymmetry transformation rules that describe the dynamics of a three-dimensional non-relativistic `pseudo-chiral' multiplet in curved non-relativistic backgrounds. The backgrounds are specified by a set of fields that determine the Newton-Cartan geometry, as well as by a set of auxiliary fields. The dynamical fields of the pseudo-chiral multiplet couple both minimally and non-minimally to the background fields. The Lagrangian is only supersymmetric when non-trivial supersymmetry parameters can be found as well-defined solutions of a set of non-relativistic Killing spinor equations. The latter are a set of algebraic and first order partial differential equations for the supersymmetry parameters that depend on the background geometric and auxiliary fields. We have derived necessary and sufficient conditions on the background fields for well-defined solutions of the Killing spinor equations to exist. Non-relativistic supersymmetric field theories can then be written down on Newton-Cartan manifolds that obey these conditions and we have given explicit examples of such backgrounds.

Here, we have obtained non-relativistic susy QFTs on non-trivial backgrounds via a null reduction of relativistic theories.\footnote{More specifically, we employed a twisted null reduction to obtain theories whose field equations are Schr\"odinger equations in curved backgrounds.} We have in particular reduced the Lagrangian, supersymmetry transformation rules and Killing spinor equations of a four-dimensional theory for a chiral multiplet in a relativistic background. The latter theory can be elegantly obtained by taking a rigid limit of matter coupled Lorentzian $\mathcal{N}=1$, $d=4$ off-shell Old Minimal supergravity \cite{Festuccia:2011ws}. Supersymmetry then requires that the allowed four-dimensional backgrounds, on which susy QFTs can be formulated, have a null Killing vector. The latter observation led us to consider the null reduction of four-dimensional relativistic susy QFTs on curved manifolds as a means to obtain three-dimensional non-relativistic ones.

One limitation of this method is that it only leads to non-relativistic susy QFTs that have a four-dimensional, relativistic origin and that it thus most likely does not lead to the most general class of non-relativistic susy QFTs on curved three-dimensional manifolds. Nevertheless, since the three-dimensional theories and backgrounds are qualitatively very different from their four-dimensional parents, it is useful to study them on their own and to give a purely three-dimensional analysis of which backgrounds lead to well-defined solutions of the non-relativistic Killing spinor equations. In this way, we can extract some features of non-relativistic susy QFTs on curved backgrounds that can be expected to hold regardless of whether they are obtained via null reduction or via other means. For instance, we have noticed in this paper that the set of non-relativistic Killing spinor equations consists not only of first order partial differential equations, but also contains algebraic equations. The latter do not decouple, but rather arise by transforming the differential Killing spinor equations under local Galilean boosts. In this way, the non-relativistic Killing spinor equations form a reducible indecomposable representation of local Galilei symmetries. Since such reducible indecomposable representations are common in non-relativistic theories, we expect that the appearance of algebraic Killing spinor equations is not just a consequence of the null reduction, but is a generic feature of non-relativistic susy QFTs on curved backgrounds.

There are several ways in which the work presented here can be extended. In this paper, we focused on the null reduction of four-dimensional $\mathcal{N}=1$ susy QFTs on curved backgrounds that involve a single chiral multiplet with canonical K\"ahler potential and zero superpotential. This reduction can straightforwardly be extended to field theories with a more general matter content and couplings or to theories with extended supersymmetry. We also focused on the reduction of four-dimensional theories that were obtained as a rigid limit of supersymmetric matter field theories coupled to Old Minimal off-shell supergravity. One can also consider theories that are obtained from a rigid limit of four-dimensional theories that are coupled to New Minimal off-shell supergravity. Such theories admit an $R$-symmetry and it would be interesting to consider their null reduction. Another direction one can consider concerns the construction of superconformal theories in curved Newton-Cartan backgrounds. Recently, non-relativistic superconformal theories have been studied in the context of the six-dimensional $(2,0)$ theory \cite{Lambert:2018lgt,Lambert:2019fne}. Such theories could also be used to give supersymmetric extensions of work on the anomaly structure of non-relativistic scale-invariant field theories \cite{jensen:2018an,Arav:2016xjc,Auzzi:2015fgg,Auzzi:2016lxb,Auzzi:2016lrq,Pal:2017ntk,Pal:2016rpz}. It would also be of interest to see whether it is possible to find interesting non-relativistic susy QFTs in the class of theories that can be obtained via null reduction, whose non-perturbative dynamics can be studied using localization techniques.

An important open question concerns the possibility of obtaining non-relativistic susy QFTs on curved backgrounds without using the null reduction, so without relying on higher-dimensional relativistic results. As mentioned above, the null reduction most likely does not lead to the most general non-relativistic susy QFTs on curved backgrounds. Moreover, there are also non-relativistic geometries that are characterized by a different foliation structure than ordinary Newton-Cartan geometry, such as e.g. string Newton-Cartan geometries that admit a foliation with spatial leaves of codimension 2 \cite{Andringa:2012uz}. These string Newton-Cartan geometries are the ones non-relativistic strings \cite{Gomis:2000bd} naturally live in \cite{Bergshoeff:2018yvt,Bergshoeff:2019pij}. They can not be obtained from null reduction and one will thus have to resort to different techniques to construct susy QFTs on them. It is therefore an interesting question to see whether non-relativistic susy QFTs on curved backgrounds can be constructed more directly, without having to rely on relativistic results. In this regard, it would be interesting to see whether one can gain more insight into the structure of non-relativistic Killing superalgebras and Killing spinor equations via cohomological methods \cite{Figueroa-OFarrill:2015rfh,deMedeiros:2016srz,Figueroa-OFarrill:2016khp,deMedeiros:2018ooy}. In order to also be able to construct Lagrangians and supersymmetry transformation rules, one can consider mimicking the Festuccia-Seiberg method directly in three dimensions. This would involve taking a rigid limit of non-relativistic supersymmetric field theories coupled to off-shell supergravity. At present however, not much is known about non-relativistic, three-dimensional off-shell supergravity nor about matter couplings in non-relativistic supergravity. Partial results, based on a non-relativistic extension of superconformal tensor calculus, led to a three-dimensional supergravity multiplet, on which the superalgebra closes upon using only geometric constraints \cite{Bergshoeff:2015ija}. So far, a non-relativistic supergravity multiplet that realizes the underlying superalgebra without having to impose any constraints, has not been constructed. Matter couplings in non-relativistic supergravity have also not been studied yet, neither in the on-shell Newton-Cartan supergravity of \cite{Andringa:2013mma}, nor in the partially off-shell formulations with geometric constraints of \cite{Bergshoeff:2015ija}. In view of obtaining more general non-relativistic susy QFTs in curved backgrounds, studying possible off-shell formulations of non-relativistic supergravity and their matter couplings is clearly a very interesting and pressing problem. We hope to report more on this in the future.

\paragraph{Acknowledgements.}  Part of this work was done while E.A.B. was visiting the 
Erwin Schr\"odinger Int. Institute for Mathematics and Physics, University of Vienna as a Senior Research Fellow. We would like to thank Shira Chapman, Guido Festuccia, Jos\'e Figueroa-O'Farrill, Neil Lambert, Fabrizio Nieri, Yaron Oz, Silvia Penati and Avia Raviv-Moshe for stimulating discussions on the manuscript. This work is supported by the Croatian Science Foundation Project ``New Geometries for Gravity and Spacetime'' (IP-2018-01-7615), and also partially supported by the European Union through the European Regional Development Fund - The Competitiveness and Cohesion Operational Programme (KK.01.1.1.06).

\appendix

\section{Conventions}

In this paper, we use the mostly plus signature. The four-dimensional flat metric in Minkowski coordinates $\{x^0, x^1, x^2, x^3\}$ is thus given by $\mathrm{diag}(-1,1,1,1)$. Four-dimensional flat indices are denoted by $A$, $B$, $\cdots$, whereas four-dimensional curved indices are denoted by $M$, $N$, $\cdots$. Flat null coordinates $x^\pm$ are introduced via
\begin{equation}
  x^- = \frac{1}{\sqrt{2}} \left(x^0 - x^3 \right) \,, \qquad x^+ = \frac{1}{\sqrt{2}} \left(x^0 + x^3 \right) \,,
\end{equation}
so that the four-dimensional flat metric in null coordinates $\{x^-, x^+, x^1, x^2\}$ is given by
\begin{equation}
  \label{eq:flatmetric}
  \eta_{AB} = \bordermatrix{& - & + & b\cr
- & 0 & -1 & 0 \cr
+ & -1 & 0 & 0 \cr
a & 0 & 0 & \delta_{ab}} \,.
\end{equation}
The fully anti-symmetric Levi-Civita symbol $\epsilon_{ABCD}$ in flat indices is taken as
\begin{equation}
  \epsilon_{0123} = 1 \,, \qquad \epsilon^{0123} = -1 \,.
\end{equation}
Introducing $\epsilon_{ab}$ and $\epsilon^{ab}$ ($a=1,2$) with
\begin{equation}
  \epsilon_{12} = \epsilon^{12} = 1 \,,
\end{equation}
the Levi-Civita symbol in flat null indices is given by
\begin{equation}
  \epsilon_{-+ab} = \epsilon_{ab} \,, \qquad \epsilon^{-+ab} = - \epsilon^{ab} \,.
\end{equation}
The symbol $\epsilon_{MNOP}$ with curved indices is defined as the tensor
\begin{equation}
  \epsilon_{MNOP} = E_M{}^A E_N{}^B E_O{}^C E_P{}^D \epsilon_{ABCD} \,.
\end{equation}
Since this is a tensor and not a tensor density, we can freely raise and lower its indices with the metric. The spin connection of General Relativity is given by
\begin{align}
  \label{eq:spinconn}
  \Omega_M{}^{AB} = 2 E^{N[A} \partial_{[M} E_{N]}{}^{B]} - E^{N A} E^{R B} E_{M C} \partial_{[N} E_{R]}{}^C \,.
\end{align}

For four-dimensional spinors, we adopt the following conventions. Four-dimensional Gamma-matrices are denoted by $\Gamma_A$ and obey the Clifford algebra
\begin{equation} \label{eq:4dCliff}
  \{\Gamma_A, \Gamma_B\} = 2\, \eta_{AB}\, \mathds{1}_4 \,.
\end{equation}
The matrices $\Gamma^\pm$ in flat null indices are given in terms of $\Gamma^0$, $\Gamma^3$ in Minkowski indices by
\begin{equation}
  \label{eq:defGammapm}
  \Gamma^\pm = \frac{1}{\sqrt{2}} \left( \Gamma^0 \pm \Gamma^3 \right) \,.
\end{equation}
The charge conjugation matrix obeys
\begin{align}
  C^T = - C \,, \qquad \Gamma_A^T = - C \Gamma_A C^{-1} \,.
\end{align}
The matrix $\Gamma_5$ is defined as
\begin{equation}
  \label{eq:defGamma5}
  \Gamma_5 = \rmi \Gamma_0 \Gamma_1 \Gamma_2 \Gamma_3 \,,
\end{equation}
and obeys $\Gamma_5^\dag = \Gamma_5$ and $\Gamma_5^T = C \Gamma_5 C^{-1}$. A Majorana spinor $\epsilon$ is a spinor for which the Dirac conjugate $\rmi \epsilon^\dag \Gamma^0$ is equal to the Majorana conjugate $\epsilon^T C$. The subscript $L/R$ on a Majorana spinor $\epsilon$ denotes a chiral projection:
\begin{align}
  \epsilon_{L/R} = P_{L/R} \epsilon \,, \qquad  \qquad P_{L/R} = \frac12 \left(\mathds{1}_4 \pm \Gamma_5\right) \,.
\end{align}
The chiral projections of a Majorana spinor $\epsilon$ then obey
\begin{align}
  \epsilon_L^* = \rmi C \Gamma^0 \epsilon_R \,, \qquad \epsilon_R^* = \rmi C \Gamma^0 \epsilon_L \,.
\end{align}
A bar on a Majorana spinor denotes the Dirac or Majorana conjugate without ambiguity. For chiral projections of a Majorana spinor $\epsilon$, we adopt the following notation:
\begin{align}
  \bar{\epsilon}_L &= \frac12 \bar{\epsilon} \left(\mathds{1}_4 + \Gamma_5\right) \,, \qquad \bar{\epsilon}_R = \frac12 \bar{\epsilon} \left(\mathds{1}_4 - \Gamma_5\right) \,.
\end{align}

\section{Null Reduction Results}\label{sec:NullReduction}

For the convenience of the reader, we review a few results on null reduction \cite{Julia:1994bs}. We follow the conventions of \cite{Bergshoeff:2017dqq}, which can also be consulted for more details.\\
The Ansatz \eqref{eq:vielbeinansatz} for the four-dimensional Vierbein $E_M{}^A$ in coordinates adapted to a null Killing vector $K^M$ leads to the following expressions for the components of the four-dimensional spin connection $\Omega_M{}^{AB}$:
\begin{alignat}{2}
  \label{eq:spinconnred}
  \Omega_{\mathsf{v}}{}^{+-} &= 0 \,, \qquad \qquad & \Omega_{\mathsf{v}}{}^{a-} &= 0 \,, \nonumber \\
  \Omega_{\mathsf{v}}{}^{a+} &= -\frac12 e^{\mu a} \tau^\rho \tau_{\mu\rho} \,, \qquad \qquad & \Omega_{\mathsf{v}}{}^{ab} &= \frac12 e^{\mu a} e^{\rho b} \tau_{\mu\rho} \,, \nonumber \\
  \Omega_\mu{}^{+-} &= -\frac12 \tau^\rho \tau_{\mu\rho} \,, \qquad \qquad & \Omega_\mu{}^{a-} &= \frac12 e^{\rho a} \tau_{\mu\rho} \,, \nonumber \\
  \Omega_\mu{}^{a+} &= -\omega_\mu{}^a + \frac12 m_\mu e^{\rho a} \tau^\sigma \tau_{\rho \sigma} \,, \qquad \qquad & \Omega_\mu{}^{ab} &= \omega_\mu{}^{ab} - \frac12 m_\mu e^{\rho a} e^{\sigma b} \tau_{\rho\sigma} \,,
\end{alignat}
where we have defined the Newton-Cartan spin connections $\omega_\mu{}^a$, $\omega_\mu{}^{ab}$ and Newton-Cartan torsion tensor $\tau_{\mu\nu}$ as follows:
\begin{align}
  \label{eq:defNCconnstaumunu}
  \omega_\mu{}^a &= e^{\nu a} \partial_{[\mu} m_{\nu]} - e_\mu{}^b e^{\nu a} \tau^\rho \partial_{[\nu} e_{\rho]b} - \tau^\nu \partial_{[\mu} e_{\nu]}{}^a - \tau_\mu e^{\nu a} \tau^\rho \partial_{[\nu} m_{\rho]} \,, \nonumber \\
  \omega_\mu{}^{ab} &= 2 e^{\nu[a} \partial_{[\mu} e_{\nu]}{}^{b]} - e_\mu{}^c e^{\nu a} e^{\rho b} \partial_{[\nu} e_{\rho] c} - \tau_\mu e^{\nu a} e^{\rho b} \partial_{[\nu} m_{\rho]} \,, \nonumber \\
  \tau_{\mu\nu} &= 2 \partial_{[\mu} \tau_{\nu]} \,.
\end{align}
Note that under Galilean boosts with parameter $\lambda^a$, the connections $\omega_\mu{}^a$ and $\omega_\mu{}^{ab}$ transform as follows
\begin{align} \label{eq:boosttrafoconns}
  \delta \omega_\mu{}^a &= - \partial_\mu \lambda^a + \lambda_b \omega_\mu{}^{ba} + \frac12 \lambda^b e_{\mu b} \tau^{a0} - \frac12 \lambda^a \tau_{\mu 0} \,, \nonumber \\
  \delta \omega_\mu{}^{ab} &= - \lambda^{[a} \tau_\mu{}^{b]} - \frac12 \lambda_c e_\mu{}^c \tau^{ab} \,.
\end{align}
We have the following expressions for the curvatures
\begin{align}
    R_{\mu\nu}(J) &= 2\partial_{[\mu}\omega_{\nu]} - 2 \omega_{[\mu}{}^a e_{\nu] a}\tau^{bc}\epsilon_{bc} - 2 \omega_{[\mu}{}^a\tau_{\nu]}\epsilon_{ab}\tau^{b0}\,,\label{eq:Jcurv}\\
    R_{\mu\nu}(G^a) &= 2\partial_{[\mu}\omega_{\nu]}{}^a + \epsilon^{ab}\omega_{[\mu}\omega_{\nu]}{}^b  + \omega_{[\mu}{}^{a} \tau_{\nu]}{}^{0}+ \omega_{[\mu}{}^{b} e_{\nu]}{}^b \tau^{a0}\,,\label{eq:Gcurv}
\end{align}
where we defined $\omega_\mu = \omega_\mu{}^{ab}\epsilon_{ab}$.

We also sometimes use an affine, torsionful connection $\bar\Gamma$ that is defined by the following Vielbein postulates
\begin{align}
    \bar\nabla_\mu \tau_\nu &= \partial_\mu\tau_\nu - \bar\Gamma_{\mu\nu}^\rho\tau_\rho = 0\,,\\
    \bar\nabla_\mu e_\nu{}^a &= \partial_\mu e_\nu{}^a + \omega_\mu{}^{ab}e_\nu{}^b + \omega_\mu{}^a\tau_\mu - \bar\Gamma_{\mu\nu}^\rho e_\rho{}^a = 0\,.
\end{align}
Using that the spin connections solve $2\partial_{[\mu}e_{\nu]}{}^a + 2\omega_{[\mu}{}^{ab}e_{\nu]}{}^b + 2\omega_{[\mu}{}^a\tau_{\nu]}=0$ and $2\partial_{[\mu}m_{\nu]} - 2\omega_{[\mu}{}^a e_{\nu]}{}^a=0$, one finds
\begin{align}
  \label{eq:NCaffconn}
  \bar{\Gamma}^\rho_{\mu\nu} &= \tau^\rho \partial_\mu \tau_\nu + \frac12 h^{\rho\sigma} \left( \partial_\mu h_{\sigma\nu} + \partial_\nu h_{\sigma \mu} - \partial_\sigma h_{\mu\nu} \right) + \tau_\mu h^{\rho\sigma} \partial_{[\nu} m_{\sigma]} + \tau_\nu h^{\rho\sigma} \partial_{[\mu} m_{\sigma]} \,,
\end{align}
where $h^{\mu\nu} = e^\mu{}_a e^\nu{}_a$. Observe that this affine Newton-Cartan connection has torsion $2\bar\Gamma^\rho_{[\mu\nu]} = \tau^\rho\tau_{\mu\nu}$. As a consequence
\begin{align}\label{eq:partialIntegration}
    \det(e^a,\tau)\bar\nabla_\mu X^\mu = \partial_\mu\lr \det(e^a,\tau)\,X^\mu\rr + \det(e^a,\tau)\tau_{0\mu}X^\mu\,.
\end{align}

We decompose the four-dimensional Clifford algebra matrices $\Gamma_A$ as tensor products of two $(2\times 2)$-matrices as follows:
\begin{align}
  \label{eq:Gammadecomp}
  \Gamma_\pm = \gamma_0 \otimes \sigma_\pm \,, \qquad \Gamma_a = \gamma_a \otimes \mathds{1}_2 \,, \quad a = 1,2 \,,
\end{align}
where $\sigma_\pm$ is given in terms of the Pauli-matrices $\sigma_1$ and $\sigma_2$ by
\begin{align}
  \label{eq:defsigmapm}
  \sigma_\pm = \frac{1}{\sqrt{2}} (\sigma_1 \pm \rmi \sigma_2 ) \,,
\end{align}
and $\gamma_0$, $\gamma_a$ are the gamma-matrices of a three-dimensional Clifford algebra, normalized as follows
\begin{equation}
    \gamma_0^2 = - \mathds{1}_2,\qquad \{\gamma_a, \gamma_b\} = 2\, \delta_{ab}\, \mathds{1}_2\qquad\text{and}\qquad \{\gamma_0,\gamma_a\}=0. \label{eq:3dClifford}
\end{equation}
The following gamma-matrix relations hold:
\begin{align}
  \gamma_{ab} = \epsilon_{ab} \gamma_0 \,, \qquad \qquad \gamma_{a0} = \epsilon_{ab} \gamma_b \,.
\end{align}
The four-dimensional charge conjugation matrix $C$ decomposes as
\begin{align}
  \label{eq:Cdecomp}
  C = \mathcal{C}_3 \otimes \sigma_1 \,,
\end{align}
where $\mathcal{C}_3$ is the three-dimensional charge conjugation matrix obeying
\begin{align}
  \label{eq:propC3}
  \mathcal{C}_3^T = - \mathcal{C}_3 \,, \qquad \gamma_0^T = - \mathcal{C}_3 \gamma_0 \mathcal{C}_3^{-1} \,, \qquad \gamma_a^T = - \mathcal{C}_3 \gamma_a \mathcal{C}_3^{-1} \,.
\end{align}

\section{Integrability Conditions}\label{sec:Integrability}

The Killing spinor equations described in Section \ref{ssec:KSeqs} lead to integrability conditions, which must be satisfied for every consistent solution. In this appendix, we perform an analysis of these conditions for each of the cases listed in Sections \ref{ssec:zetapluszero} and \ref{ssec:zetaplusneq0}. For convenience, we first organize the two differential Killing spinor equations in a way that highlights the three different gamma matrix structures that appear. Specifically, 
\begin{subequations}\label{kse1}
	\begin{align} 
&\bar\nabla_{\mu}\epsilon_{+}:=\left(P_{\mu}^{+}+Q_{\mu}^{+}\gamma_0+R_{\mu}^{a+}\gamma_a\right)\epsilon_{+}+\left(\widetilde{P}_{\mu}^{+}+\widetilde{Q}_{\mu}^{+}\gamma_0+\widetilde{R}_{\mu}^{a+}\gamma_a\right)\epsilon_{-}\,,\\[4pt]
&\bar{\nabla}_{\mu}\epsilon_{-}:=\left(P_{\mu}^{-}+Q_{\mu}^{-}\gamma_0+R_{\mu}^{a-}\gamma_a\right)\epsilon_{-}+\widetilde{R}_{\mu}^{a-}\gamma_{a}\epsilon_{+}\,,\label{kse2}
\end{align}
\end{subequations}
where the covariant derivatives are defined as in Eqs. \eqref{eq:covderepspm} and the tensors appearing in \eqref{kse1} are 
	\begin{align}
&P_{\m}^{+}=-\sfrac 14 \t_{\m}{}^{0}-\sfrac 16 e_{\m}{}^{a}v_{b}\e_{ab}\,, \quad \widetilde{P}_{\m}^+=-\sfrac{\sqrt{2}}{6}\text{Im}(u)\t_{\m}\,,\nn\\[4pt]
& Q_{\m}^{+}=-\sfrac 13 e_{\m}{}^{a}v_{a}-\sfrac 16 \t_{\m}v_0\,, \quad \widetilde{Q}_{\m}^{+}=-\sfrac{\sqrt{2}}{6}\text{Re}(u)\t_{\m}\,,\nn\\[4pt]
& R_{\m}^{a+}=-\sfrac 16 \text{Re}(u)e_{\m}{}^{a}+\sfrac 16 \text{Im}(u)e_{\m}{}^{b}\e_{ab}\,,\quad \widetilde{R}_{\m}^{a+}=-\sfrac{\sqrt{2}}{6}ve_{\m}{}^{a}-\sfrac{\sqrt{2}}{6}\tau_{\mu}v_a+\sfrac{\sqrt{2}}{4}\t_{\m}{}^{b}\e_{ab}\,,\nn\\[4pt]
&P_{\m}^{-}=\sfrac 14 \t_{\m}{}^{0}+\sfrac 16 e_{\m}{}^{a}v_{b}\e_{ab}\,,\quad Q_{\m}^{-}=\sfrac 13e_{\m}{}^{a}v_a+\sfrac 12\t_{\m}v_0\,,\nn\\[4pt]
& R_{\m}^{a-}=-\sfrac 16 \text{Re}(u)e_{\m}{}^{a}-\sfrac 16 \text{Im}(u)e_{\m}{}^{b}\e_{ab}\,,\quad \widetilde{R}_{\m}^{a-}=\sfrac{\sqrt{2}}{6}e_{\m}{}^{a}v_0\,.\label{ICtensors}
\end{align}
Next, since the covariant derivatives \eqref{eq:covderepspm} transform under boosts as 
\begin{align} 
&	\d(\bar\nabla_{\mu}\epsilon_{+})= -\frac 14 \l_ce_{\m}{}^{c}\tau^{ab}\g_{ab}\epsilon_+-\frac 14 \l^a\tau_{\mu}\tau_{0}{}^{b}\g_{ab}\epsilon_+\,, \\
& \d(\bar\nabla_{\mu}\epsilon_{-})=-\frac{\sqrt{2}}{2}\l^a\g_{a0}\bar\nabla_{\m}\epsilon_+-\frac 14 \l_ce_{\m}{}^{c}\tau^{ab}\g_{ab}\epsilon_--\frac 14 \l^a\tau_{\mu}\tau_{0}{}^{b}\g_{ab}\epsilon_--\nn\\
& \qquad\quad\qquad -\frac{\sqrt{2}}{4}\l^be_{\m b}\tau^{a0}\g_{a0}\epsilon_++\frac{\sqrt{2}}{4}\l^ae_{\m}{}^{b}\tau_{b0}\g_{a0}\epsilon_+\,,
	\end{align} 
we define 
\begin{align} 
&	\bar\nabla_{\m}\bar\nabla_{\n}\epsilon_+=(\partial_{\m}+\frac 14 \omega_{\m}{}^{ab}\g_{ab})\bar\nabla_{\n}\epsilon_+-\frac 14 \omega_{\m}{}^{c}e_{\n c}\tau^{ab}\g_{ab}\epsilon_+-\frac 14 \omega_{\m}{}^{a}\tau_{\n}\tau_{0}{}^{b}\g_{ab}\epsilon_+\,, 
	\\ 
& \bar\nabla_{\m}\bar\nabla_{\n}\epsilon_-=	(\partial_{\m}+\frac 14 \omega_{\m}{}^{ab}\g_{ab})\bar\nabla_{\n}\epsilon_--\frac{\sqrt{2}}{2}\omega_{\mu}{}^{a}\g_{a0}\bar\nabla_{\n}\epsilon_+-\frac 14 \omega_{\m c}e_{\n}{}^{c}\tau^{ab}\g_{ab}\epsilon_--\nn\\ 
&\qquad\quad\qquad -\frac 14 \omega_{\mu}{}^{a}\tau_{\nu}\tau_{0}{}^{b}\g_{ab}\epsilon_--\frac{\sqrt{2}}{4}\omega_{\m}{}^{b}e_{\nu b} \tau^{a0}\g_{a0}\epsilon_{+}+\frac{\sqrt{2}}{4}\omega_{\m}{}^{a}e_{\n}{}^{b}\tau_{b0}\g_{a0}\epsilon_+\,. 
	\end{align}
A straightforward computation gives the following result for the commutation relations of covariant derivatives, 
\begin{align}  
	& [\bar\nabla_{\m},\bar\nabla_{\n}]\epsilon_{+}=\frac 14 R_{\m\n}(J)\g_0\epsilon_+\,,\\
	& [\bar\nabla_{\m},\bar\nabla_{\n}]\epsilon_{-}=\frac 14 R_{\m\n}(J)\g_0\epsilon_-+\frac{\sqrt{2}}{2}\epsilon_{ab}R_{\m\n}(G^b)\g^a\epsilon_+\,,
	\end{align} 
with the curvatures given as in \eqref{eq:Jcurv} and \eqref{eq:Gcurv}. 
In the spirit of Section 4, we express the solutions of the Killing spinor equations  in terms of commuting spinors $(\zeta_+,\zeta_-)$ and perform an analysis for the following cases. 

\subsection[The case $\zeta_+ = 0$]{The case \boldmath${\zeta_+ = 0}$}

In the present case, the covariant derivatives as defined above satisfy the commutation relation 
\begin{equation} 
	[\bar{\nabla}_{\mu},\bar{\nabla}_{\nu}]\z_{-}=\frac 14 R_{\m\n}(J)\gamma_0\z_-\,.
	\end{equation}
Let us first assume that the spinor $\z_-$ is not further constrained and there are two solutions to the Killing spinor equations. This is the case when $u=0$, as discussed in the proof of Theorem \ref{thm1}. Then a direct calculation on the basis of \eqref{kse1} leads to the integrability condition 
\bea \label{IC1a}
\left(U_{\m\n}+V_{\m\n}\g_0+W_{\m\n}^{a}\g_a\right)\zeta_-=0\,,
\eea  
where 
\bse
 \begin{align}
& U_{\m\n}=2\partial_{[\m}P_{\n]}^{-}\,,\\[4pt] 
& V_{\m\n}=-\frac 14 R_{\m\n}(J)+2\partial_{[\m}Q_{\n]}^{-}-2\epsilon_{ab} R_{[\m}^{a-}R_{\n]}^{b-}-\frac 12 \omega_{[\mu}{}^{c}e_{\nu]c}\tau^{ab}\epsilon_{ab}-\frac 12 \omega_{[\m}{}^{a}\tau_{\nu]}\tau_{0}{}^{b}\epsilon_{ab}\,,\\[4pt]
& W_{\m\n}^{a}=2\partial_{[\m}R_{\nu]}^{a-}+\epsilon^{a}{}_{b}\omega_{[\m}R_{\n]}^{b-}-4\epsilon^{a}{}_{b}Q_{[\m}^{-}R_{\n]}^{b-}\,.
\end{align}
\ese
 Thus we are directly led to impose three conditions, namely 
\be 
U_{\m\n}=0\,,\quad V_{\m\n}=0\,,\quad W_{\m\n}^{a}=0\,.
\ee 
Upon substitution of $P_{\m}^{-}$ as given in \eqref{ICtensors}, a straightforward calculation shows that $U_{\m\n}=0$ is an identity due to the relation
\be 
\tau_{0a}=-\frac 23 \epsilon_{ab}v^b\,,
\ee 
which follows from the algebraic Killing spinor equation \eqref{eq:algepsp02} in the present case that $u=0$. 
In addition, $R_{\m}^{a-}$ vanishes and thus the condition $W_{\m\n}^{a}=0$ is identically satisfied as well. The remaining integrability condition, $V_{\m\n}=0$, can be algebraically manipulated and it yields the final result 
\be 
R_{\m\n}(J)=\frac 83 D_{[\m}v_{\n]}-\frac 43 \tau_{[\m}D_{\n]}v_0+\frac 23 \tau_{\m\n}v_0\,,
\ee 
where the boost covariant derivatives on $v_{\m}$ and $v_0$ are defined as 
\bea 
&& D_{\m}v_{\n}=\partial_{\m}v_{\n}-\omega_{\m}{}^{a}e_{\n a}v\,,\\[4pt]
&& D_{\m}v_0=\partial_{\m}v_0-\omega_{\m}{}^{a}v_a\,.
\eea 

The above situation changes when the spinor $\z_{-}$ is constrained further. Indeed, when it obeys 
\be 
X^{-a}\g_a\z_{-}=\z_{-}\,,
\ee 
it is simple to show that 
\be 
\gamma_a\z_-=Y^{-}_a\g_0\z_-+X_a^{-}\z_-\,,
\ee 
and therefore there is a reduction of the possible gamma matrix structures in the Killing spinor equations and in the corresponding integrability conditions. One should then be cautious and rederive the integrability conditions, since now the derivative can act on $X_a^-$ too and thus \eqref{IC1a} is no longer the correct condition.  Since now there are only two possible gamma matrix structures, there are two conditions which read as 
\bse\begin{align}
& 2\partial_{[\m}\left(P_{\n]}^-+R_{\n]}^{a-}X_a^{-}\right)=0\,,\label{IC2a}\\
& 2\partial_{[\m}\left(Q_{\n]}^-+R_{\n]}^{a-}Y_a^{-}\right)-\frac 12 \omega_{[\mu}{}^{c}e_{\nu]c}\tau^{ab}\epsilon_{ab}-\frac 12 \omega_{[\m}{}^{a}\tau_{\nu]}\tau_{0}{}^{b}\epsilon_{ab}=\frac 14 R_{\m\n}(J)\,.\label{IC2b}
\end{align}\ese
Substituting the tensors given in \eqref{ICtensors}, one finds that the first condition \eqref{IC2a} reads
\be 
\partial_{[\m}\left(e_{\n]}{}^{a}\left(\frac 32\tau_{0a}+\epsilon_{ab}v^b-\text{Re}(u)X_a^{-}+\text{Im}(u)Y_a^{-} \right)\right)=0\,,
\ee  
and therefore it is satisfied identically due to Theorem \ref{thm1}. The remaining integrability condition stemming from \eqref{IC2b} is then found to be 
\bea 
R_{\m\n}(J)&=&\frac 83 D_{[\mu}v_{\n]}-\frac 43 \tau_{[\mu}D_{\nu]}v_0+\frac 2 3\tau_{\m\n}v_0-\nn\\[4pt] && -\,\frac 4 3 e_{[\n}{}^aY_a^-\partial_{\m]}(\text{Re}{(u)})-\frac 43 e_{[\n}{}^{a}X_a^-\partial_{\m]}(\text{Im}(u))-\nn\\[4pt] 
&& -\,\frac 43e_{[\n}{}^{a}\text{Re}(u)D_{\m]}Y_a^-
-\frac 43 e_{[\n}{}^{a}\text{Im}(u)D_{\m]}X_a^-\,,\label{ICzeta-}
\eea 
where the covariant derivatives on $X_a^-$ and $Y_a^-$ are as defined in Theorem \ref{thm1}.

\subsection[The case $\zeta_+ \ne 0$]{The case \boldmath${\zeta_+ \ne 0}$}

Following the same logic as in the previous case, first we examine the integrability condition in case there are two solutions, which means that $u=0$, as shown in the main text. Then $R_{\mu}^{a+}=0$ and a straightforward computation leads to the two conditions 
\bse\begin{align} 
	& 2\partial_{[\mu}P_{\n]}^{+}=0\,,\\ 
	& -\frac 14 R_{\m\n}(J)+2\partial_{[\m}Q_{\n]}^{+}-\frac 12 \omega_{[\mu}{}^{c}e_{\nu]c}\tau^{ab}\epsilon_{ab}-\frac 12 \omega_{[\m}{}^{a}\tau_{\nu]}\tau_{0}{}^{b}\epsilon_{ab}=0\,.\label{case2acondition2}
	\end{align}  \ese
The first condition, using the explicit expression for $P_{\m}^+$, becomes 
\be 
\partial_{[\m}\left(e_{\n]}{}^a\left(\tau_{a}{}^{0}+\frac 23 \epsilon_{ab}v^{b}\right)\right)=0\,.
\ee
However, since in the present case (see Theorem \ref{thm2}) it holds that $\tau_{a}{}^{0}=\frac 23 \epsilon_{ab}v^{b}$ for $u=0$, we directly obtain that the 1-form $\tau_{\m 0}$ must be closed, 
\be \label{closure}
\partial_{[\mu}\tau_{\n]0}=0\,.
\ee 
This is in agreement with Theorem \ref{thm2}, where this 1-form is even exact, and therefore this condition does not pose further restrictions. On the other hand, the second condition \eqref{case2acondition2}, upon using the algebraic Killing spinor equations as in Theorem \ref{thm2}, becomes 
\be 
R_{\m\n}(J)= -\frac 83 D_{[\mu}v_{\n]}-\frac 43 \tau_{[\mu}D_{\nu]}v_0-\frac 29 \tau_{\m\n}v_0+\frac 49 e_{[\m}{}^{a}\tau_{\n]}v_0\tau_{a0}\,,
\ee 
with the covariant derivatives defined as before but specialized to the value of the boost connection $\omega_{\mu}{}^{a}$ given in \eqref{eq:solboostconn}.  

As in the previous case, when the spinor $\z_{+}$ is constrained further and $u\ne 0$, the integrability condition changes. The spinor obeys 
\be 
X^{+a}\g_a\z_{+}=\z_{+}\,,
\ee 
and therefore
\be 
\gamma_a\z_+=Y^{+}_a\g_0\z_++X_a^{+}\z_+\,,
\ee 
 which reduces the possible gamma matrix structures in the Killing spinor equations and in the corresponding integrability conditions. The resulting two conditions are analogous to the corresponding ones for $\zeta_+=0$ and they read as 
\bse\begin{align}
	& 2\partial_{[\m}\left(P_{\n]}^++R_{\n]}^{a+}X_a^{+}\right)=0\,,\label{IC2a+}\\
	& 2\partial_{[\m}\left(Q_{\n]}^++R_{\n]}^{a+}Y_a^{+}\right)-\frac 12 \omega_{[\mu}{}^{c}e_{\nu]c}\tau^{ab}\epsilon_{ab}-\frac 12 \omega_{[\m}{}^{a}\tau_{\nu]}\tau_{0}{}^{b}\epsilon_{ab}=\frac 14 R_{\m\n}(J)\,.\label{IC2b+}
\end{align}\ese
Upon substitution of $P_{\m}^{+}$ and $R_{\m}^{a+}$ from \eqref{ICtensors},  the first condition \eqref{IC2a} becomes 
\be 
\partial_{[\m}\left(e_{\n]}{}^{a}\left(\frac 32\tau_{0a}+\epsilon_{ab}v^b+\text{Re}(u)X_a^{+}+\text{Im}(u)Y_a^{+} \right)\right)=0\,,
\ee  
and combining it with the algebraic Killing spinor equation that leads to \eqref{eq:soltau0app} of Theorem \ref{thm2}, it translates once more to the closedness of the one-form $\tau_{\mu 0}$, namely to 
\eqref{closure}, which holds because the one-form is exact. In turn, the second condition \eqref{IC2b+} becomes 
\begin{align} 
& R_{\m\n}(J)= -\frac 83 D_{[\mu}v_{\n]}-\frac 43 \tau_{[\mu}D_{\nu]}v_0-\frac 29 \tau_{\m\n}v_0+\frac 49 e_{[\m}{}^{a}\tau_{\n]}v_0(\tau_{a0}-2\text{Re}(u)X_a^++2\text{Im}(u)Y_a^+) \, - \nn\\ 
& \qquad \quad\qquad - \frac 43 e_{[\nu}{}^{a}Y_a^+\partial_{\mu]}(\text{Re}(u))+\frac 43 e_{[\nu}{}^{a}X_a^+\partial_{\mu]}(\text{Im}(u)) \,- \nn\\ 
& \qquad\quad\qquad -\frac 43 e_{[\nu}{}^{a}\text{Re}(u)D_{\mu]}Y_a^++\frac 43 e_{[\nu}{}^a\text{Im}(u)D_{\mu]}X_a^+ \,,
\end{align} 
with the covariant derivatives on $X_a^+$ and $Y_a^+$ as in Theorem \ref{thm2}.

\providecommand{\href}[2]{#2}\begingroup\raggedright\endgroup


\end{document}